\newtheorem{theorem}{\bf Theorem}
\newtheorem{lemma}{\bf Lemma}
\newtheorem{definition}{\bf Definition}
\newcounter{step}
\newlength{\totlinewidth}
  {\end{list}%
  \rule{\linewidth}{1pt}}
\newcounter{substep}
\newlength{\aligntop}
\newlength{\alignbot}
\renewenvironment{align}{%
  \vspace{\aligntop}
  \start@align\@ne\st@rredfalse\m@ne
}{%
  \math@cr \black@\totwidth@
  \egroup
  \ifingather@
    \restorealignstate@
    \egroup
    \nonumber
    \ifnum0=`{\fi\iffalse}\fi
  \else
    $$%
  \fi
  \ignorespacesafterend%
  \vspace{\alignbot}\par\noindent
} \makeatother
\begin{document}
%\pagenumbering{gobble}% Remove page numbers (and reset to 1)
\clearpage
%\maketitle
%Title.
% ------
\title{Stochastic Coalitional Games for Cooperative Random Access in M2M Communications}
%
% Single address.
% ----------------------------------------------------

\author{Mehdi Naderi Soorki, Walid Saad, Mohammad Hossein Manshaei, and Hossein Saidi
\thanks{M. Naderi Soorki, M. H. Manshaei, and H. Saidi are with the Department of Electrical and Computer Engineering, Isfahan University of Technology, Isfahan 84156-83111, Iran. e-mail: m.naderisoorki@ec.iut.ac.ir, \{manshaei,hsaidi\}@cc.iut.ac.ir.}% <-this % stops a space
\thanks{W. Saad is with the Wireless@VT, Bradley Department of Electrical and Computer Engineering, Virginia Tech, Blacksburg, VA 24061, USA, e-mail: walids@vt.edu}% <-this % stops a space
\thanks{This research was supported by the Office of Naval Research (ONR) under Grant N00014-15-1-2709.}
\thanks{Manuscript received 13 April 2016; revised 25 Sep. 2016, second revised 15 Jan. 2017, third revised 24 April 2017, accepted 15 June 2017.}
}

\maketitle
\thispagestyle{empty}
\vspace{0cm}
\begin{abstract}
In this paper, the problem of random access contention between machine type devices (MTDs) in the uplink of a wireless cellular network is studied. In particular, the possibility of forming cooperative groups to coordinate the MTDs' requests for the random access channel (RACH) is analyzed. The problem is formulated as a stochastic coalition formation game in which the MTDs are the players that seek to form cooperative coalitions to optimize a utility function that captures each MTD's energy consumption and time-varying queue length. Within each coalition, an MTD acts as a coalition head that sends the access requests of the coalition members over the RACH. One key feature of this game is its ability to cope with stochastic environments in which the arrival requests of MTDs and the packet success rate over RACH are dynamically time-varying. The proposed stochastic coalitional is composed of multiple stages, each of which corresponds to a coalitional game in stochastic characteristic form that is played by the MTDs at each time step. To solve this game, a novel distributed coalition formation algorithm is proposed and shown to converge to a stable MTD partition. Simulation results show that, on the average, the proposed stochastic coalition formation algorithm can reduce the average fail ratio and energy consumption of up to 36\% and 31\% for a cluster-based distribution of MTDs, respectively, compared to a noncooperative case. Moreover, when the MTDs are more sensitive to the energy consumption (queue length), the coalitions' size will increase (decrease).
\end{abstract}

{\small \emph{Index Terms}--- Game Theory; Machine-to-Machine Communications; Internet of Things; Coalitional Games.}

\section{Introduction}
\label{sec:Intro}
Machine-to-machine (M2M) communication between machine type devices (MTDs) such as sensors or wearables will lie at the heart of tomorrow's Internet of Things (IoT) system~\cite{ghavimi2014m2m}. In order to support massive M2M communications, there is a need for a reliable wireless infrastructure. In this respect, cellular networks provide an ideal platform for M2M communications, due to their proven effectiveness and reliability. However, deploying M2M over cellular networks such as LTE faces many challenges that range from network deployment to resource allocation and multiple access~\cite{al2015internet,ghavimi2014m2m,cheng2015d2d,laya2014random,shafiq2012first,hussain2014multi}.

In particular, in cellular LTE systems, whenever a device intends to access the network, it begins by following a random access (RA) procedure that is done \emph{before the resource allocation phase} as explained in~\cite{laya2014random} and~\cite{Evolved2011Universal}. During the RA procedure, each device needs to send a preamble signal over a special channel, known as the physical random access channel (RACH), which is used to transmit an initial preamble. In the RA procedure of a cellular network, the RACH is formed by a periodic sequence of allocated time-frequency resources, called random access (RA) slots. These slots are reserved in the uplink channel of the network for the transmission of access requests~\cite{Evolved2011Universal}. Thus, the transmission of preambles that shows the requests made by MTDs for uplink resource, is synchronized during the RA slots. Each device selects the preamble uniformly from the available preambles~\cite{laya2014random}. LTE typically uses a contention-based random access procedure for the initial association to the network, for the request of resources for transmission, and for re-establishing a connection upon failure~\cite{laya2014random} and~\cite{Evolved2011Universal}. The contention-based RA procedure consists of a four-message handshake between any device such as MTDs and the base station (BS) in order to successfully transmit an initial preamble. The RA procedure in existing cellular systems has been mainly designed for human-to-human (H2H) communication scenarios in which the amount of uplink (UL) traffic is normally lower than the downlink (DL) traffic. In contrast, M2M applications will produce significantly more UL traffic than in the downlink~\cite{ghavimi2014m2m}. In an M2M scenario, during the RA process of LTE, a large number of MTDs will simultaneously attempt to access a
shared preamble. This can have several drawbacks such as to a low random access success rate, the waste of radio resources, packet loss, latency, and extra energy consumption as pointed out in~\cite{ghavimi2014m2m} and~\cite{laya2014random}.

Several recent works have proposed new techniques for reducing RA congestion in M2M scenarios such as in~\cite{hasan2013random,ilori2015random,laya2014random,thomsen2013code,yang2012performance,kimenhanced, jang2014spatial}. A number of such works, such as~\cite{laya2014random,ilori2015random}, and~\cite{thomsen2013code}, focus on the RA process that involves preamble selection, designing a new preamble sequence, and efficient preamble allocation. In~\cite{yang2012performance}, the authors propose a new backoff algorithm while the work in~\cite{cheng2011prioritized} introduces a prioritized RA architecture. The objective in these works is to improve RA efficiency. On the other hand, there has been a number of recent works such as~\cite{hussain2014multi,kimenhanced,jang2014spatial,fu2014group,lien2011toward,lee2013feasibility,tu2011energy,ho2012energy}, and~\cite{wei2012joint} that focus on how to cluster MTDs in an efficient way so as to decrease the load over the RACH. An enhanced RA scheme based on spatial grouping for reusable preamble allocation is proposed in~\cite{kimenhanced} and~\cite{jang2014spatial}. This scheme reuses the preamble resources based on spatial grouping during the RA procedure. A group mobility management mechanism is studied in~\cite{fu2014group} using which MTDs are grouped based on the similarity of their mobility patterns at the location database, and only the leader machine performs mobility management. Clustering techniques based on quality-of-service (QoS) requirements have been proposed~\cite{lien2011toward,tu2011energy} and~\cite{ho2012energy}, for power allocation and energy-efficient M2M communications. Optimal cluster formation and power control for maximizing throughput and minimizing transmit power are derived using mixed-integer non-linear programming in~\cite{hussain2014multi} and~\cite{wei2012joint}.

{These existing works for M2M clustering such as in~\cite{hussain2014multi,kimenhanced,jang2014spatial,fu2014group,lien2011toward,lee2013feasibility,tu2011energy,ho2012energy}, and~\cite{wei2012joint} have mainly focused on clustering using spatial metrics such as the distance between MTDs and some basic QoS metrics. Moreover, these works rely on centralized algorithms that are normally used to find the optimal clustering of the network. However, a centralized approach requires collecting a significant amount of information on the random arrival of requests at the level of the MTDs as well as on the random collisions that can occur over the RACH. In an M2M network, this information collection must be updated every time slot due to the stochastic changes in the arrival of requests to the MTDs and the possible collisions. If done in a centralized manner, such a dynamic information update will significantly increase the signaling overhead over the uplink of the M2M network and, thus, will not be practical. In addition to signaling overhead, a centralized approach to coalition formation is generally known to be NP-complete as shown in~\cite{sandholm1999coalition}, especially for large number of MTDs. The complexity of such centralized approach grows exponentially with the number of MTDs because the number of all possible partitions for MTD set given by a value known as the Bell number~\cite{sandholm1999coalition}. Thus, to decrease the complexity and signaling overhead it is highly desirable to equip the MTDs with distributed cooperative strategies that require little or no reliance on centralized entities such as base stations. Moreover, clustering MTDs in a practical cellular network must account not only for spatial metrics and QoS such as in~\cite{hussain2014multi,kimenhanced,jang2014spatial,fu2014group,lien2011toward,lee2013feasibility,tu2011energy,ho2012energy} and~\cite{wei2012joint}, but also for the stochastic changes in the M2M communication environment. Thus, the need for a stochastic coalition formation approach results from the fact that, in practice, an M2M communication network is highly dynamic and stochastic in nature. This stochastic nature stems from various features of the M2M environment such as random arrival of access requests to the MTDs and random preamble collision over the RACH. {Note that modeling and capturing the various dynamics of the M2M system, such as the random requests that arrive at the MTDs or random collision over RACH is very challenging even for a single-cell scenario. In fact, this dynamic clustering problem has not been considered in any of the existing literature on M2M such as the works in~\cite{hussain2014multi,kimenhanced,jang2014spatial,fu2014group,lien2011toward,lee2013feasibility,tu2011energy,ho2012energy} and~\cite{wei2012joint} that also consider a single BS.} The advantages of using a stochastic coalition formation approach are: 1) distributed solutions do not require any database that records information such as the MTDs' locations or stochastic environment changes such as the random arrival of requests at the level of the MTDs or the collisions that can occur over the RACH, 2) the signaling overhead for updating dynamically varying information decreases in a coalitional game solution due to the fact that the MTDs will autonomously perform coalition formation to adapt to the new changes without any need to send any information to a centralized controller, 3) the complexity of clustering MTDs is more manageable in a distributed coalitional game solution because the MTDs will individually perform distributed coalition formation and, unlike in the centralized approach, there is no need to search over all the partitions of the MTDs' set, and 4) a coalitional game formulation allows understanding how each MTD can make its own decision on forming cooperative group in a self-configuring M2M network.} Here, we note that the use of overlapping coalition formation approaches such as the ones in~\cite{lu2014layered} and~\cite{7066970} is not suitable for M2M communication scenarios. Overlapping coalition game models are useful in problems in which devices can further improve the system performance and efficiency by splitting their coalition membership between multiple, overlapping coalitions~\cite{lu2014layered} and~\cite{7066970}. In the cooperative M2M random access problem, having an overlap between two coalitions of MTDs will not lead to sharing additional preambles between the overlapping coalitions. Moreover, the incoming packet rate of the queues of devices belonging to overlapping coalitions will increase. Consequently, as proposed in this work, one must adopt the more tractable non-overlapping coalition formation approaches for M2M clustering.

The main contribution of this paper is to analyze the RA procedure for M2M communications and design a new coalition formation protocol using which the MTDs can autonomously form clusters or coalitions in the presence of stochastic arrival requests and a stochastic number of successfully transmitted packets over the RACH. We formulate the problem as a stochastic coalition formation game in which the MTDs are the players. In this game, the MTDs seek to cooperate with one another in order to coordinate their RA and use of the RACH. In particular, within each coalition, a coalition head sends the access requests of the coalition's members. The performance of each coalition is captured via a utility function that reflects the number of requests that the coalition members want to send over the RACH and the energy consumption of its members during each time slot. To solve this game, we propose an algorithm that enables the MTDs to form the optimal coalitions while optimizing a utility function that captures stochastic changes such as the arrival requests of MTDs and the packet success rate of the RACH. Under these stochastic changes, we show that the proposed algorithm can reach a stable partition, if the MTDs are sufficiently farsighted and they value future payoffs more than the present ones. For this game, we compute the required threshold of the farsighted level of the MTDs that is needed to form stable coalitions under proposed algorithm. Simulation results show that the proposed approach can reduce the fail ratio and energy consumption compared to a traditional noncooperative random access model. The results show that, on the average, the proposed stochastic cooperative random access model provides a reduction of the fail ratio and energy power up to 36\% and 31\% for a cluster-based distribution of MTDs, respectively, compared to a noncooperative case. We note that, although coalitional game theory has been used in many works related to wireless communication such as~\cite{han2012game,5230848} and~\cite{7395023}, to the best of our knowledge, none of these existing works has developed a stochastic game model, in general, and for M2M communication, in particular. In summary, the novelty of our contribution, compared to existing work, lies in the following key points:
\begin{itemize}
\item We develop a new, M2M-specific model for the stochastic value function of an M2M coalition. This model shows that, after forming coalitions, the value function of the coalitions will change during the next time slots and the players will become uncertain about their payoff in future time slots. Then, in a given time slot, we use a new coalitional game class, known as games in \emph{stochastic characteristic function form}.
\item In addition to modeling the cooperation of MTDs during one time slot as a coalitional game in stochastic characteristic function form, we modeled the cooperation of MTDs during different time slots as a \emph{stochastic coalition game}. In each stage of the stochastic coalition game, the cooperation of MTDs is modeled using the game mentioned in the previous bullet.
\item For finding stable coalitions under unknown stochastic changes in the value function, we have developed a novel coalition formation algorithm that explicitly accounts for the presence of a discount factor $\delta$ in the value function.
\end{itemize}

The rest of this paper is organized as follows. Section~\ref{Sec:Non-Model} presents the noncooperative random access model in a cellular LTE network. In Section~\ref{Sec:Gam-Model}, we model the problem using stochastic coalition formation in games. Simulation results are presented and analyzed in Section~\ref{Sec:Performance}. Finally, conclusions are drawn in Section~\ref{Sec:Conclusion}.
\section{System Model}
\label{Sec:Non-Model}
Consider a cellular network composed of one BS and a set $\mathcal{M}$ of $M$ MTDs that seek to access the network's uplink resources to send their data. In this network, the RACH includes $\mu$ preambles. In each time slot, each MTD will randomly transmit one of these preambles to the BS. We consider an infinite number of discrete time slots $\{1,2,...,t,...\}$ with duration $T$ for each RA slot. In each RA slot, every MTD $m \in \mathcal{M}$ transmits its access request with a fixed probability $p$. This is a practical assumption when considering the access barrier algorithm that is used to reshape and distribute the traffic over the RACH, as discussed in~\cite{cheng2011prioritized} and~\cite{7029666}. Since our model focuses on the RA process, we do not consider human type devices as their impact will be equivalent to the MTDs.

When the MTDs are acting in a noncooperative manner, if an MTD $m$ selects an RA preamble while other MTDs that want to transmit RA requests do not select that specific preamble, then MTD $m$ can successfully transmit an RA request since there will be no collisions. Hence, if only MTD $m$ submits an RA request then the probability of its successful transmission will be equal to the probability of data transmission $p$. Thus, the probability that an MTD $m$ successfully transmits an RA request in a noncooperative manner is given by:
\begin{equation}\label{prob}
P_s^\mathcal{M}=p(1-p)^{M-1}+\sum_{j=2}^{M}\frac{1}{\mu}\left(1-\frac{1}{\mu} \right)^{j-1} P_M(j),
\end{equation}
where $P_M(j)=\frac{M!}{j!(M-j)!}p^{j}(1-p)^{M-j}$ is the probability that $j$ MTDs out of a total of $M$ MTDs transmit their access requests during the current RA time slot. In (\ref{prob}), $\frac{1}{\mu}\left(1-\frac{1}{\mu}\right)^{j-1}$ is the probability that MTD $m$ selects a preamble which is different from the preambles that were selected by the other $j-1$ MTDs.
%\begin{equation}
%P_s^\mathcal{M}=\sum_{j=1}^{|\mathcal{M}|-1}\frac{1}{\mu}\left(1-\frac{1}{\mu} \right)^j P_M(j|m)+P_M(0|m),
%\end{equation}
%where $P_M(j|m)=\frac{(|\mathcal{M}|-1)!}{j!(|\mathcal{M}|-j-1)!}p^{j}(1-p)^{|\mathcal{M}|-j-1}$ is the probability that $j$ MTDs, in addition to MTD $m$, transmit access request during the current RA time slot.

Due to the random nature of the arrival requests at each MTD and the departure requests over the RACH, we define a discrete-time queueing system with $Q_{m,t}$ being the number of requests that MTD $m$ has buffered at time slot $t$. We assume that the maximum number of requests in this queue is $K$. When MTDs do not cooperate, the change in the queue length of each MTD $m$ at each time slot is given by:
\begin{equation}
Q_{m,t+1}=\max\{Q_{m,t}-d_{m,t}+a_{m,t},K\},
\label{queue_mac}
\end{equation}
where $a_{m,t}$ and $d_{m,t}$ represent, respectively, the number of arrival and departure requests for MTD $m$  conditioned on the packet success rate over the RACH at time slot $t$. $Q_{m,t}$ can be modeled as a $G/G/1$ queue which represents the queue length in a system with a single server where inter-arrival times have a general (arbitrary) distribution and service times have a (different) general distribution~\cite{cooper1981introduction}. In particular, we have $\textrm{Pr}(a_{m,t}=1)=p$ and $\textrm{Pr}(d_{m,t}=1)=P_s^\mathcal{M}$. {(\ref{queue_mac}) models the $B$-bit preambles as a queue at the MAC layer. Some packets can be dropped due to the limited length of the buffer at the MAC layer or due to collision over the RACH. At the MAC layer, the loss is defined by the packet loss rate. For sending one $B$-bit preamble from the queue of the MAC layer, each MTD must transmit $B$ bits over the RA slot.}

For simplicity, we assume that each MTD transmits a single, fixed-sized packet request of size $B$ bits to the BS with transmit power $P_{LR}^z$. Let $Z_{LR}$ be the number of subcarriers in the network, with each subcarrier having a bandwidth $B_z$. Let $h_{m}^z=H_0|d_m|^{-\nu}\xi$ be the channel gain for cellular link between MTD $m$ and BS, where $H_0$ is the path loss constant, $d_m$ is the distance between MTD $m$ and the BS, $\nu$ is the path loss exponent, and $\xi$ is the flat fading Rayleigh parameter with mean $1$. The achievable rate of the cellular link between MTD $m$ and BS for subcarrier $z$ can be given by:
\begin{equation}
R_{m}=B_z\log_2(1+\frac{P_{LR}^zh_{m}^z}{N_0+\sum\limits_{n\neq m,z'=z}{h_{n}^{z'}P_{LR}^{z'}}}),
\label{bitrate_physical}
\end{equation}
where $\sum\limits_{n\neq m,z'=z}{h_{n}^{z'}P_{LR}^{z'}}$ is the interference received from other MTDs over the cellular link. {(\ref{bitrate_physical}) shows the achievable rate over the resource of RA slot in the physical layer of the cellular link. The loss in the physical layer is due to channel gain, noise, and interference which are captured by (\ref{bitrate_physical}).} The time needed to send the $B$-bit packet will be $\frac{B}{R_{m}}$. We introduce a power allocation mechanism that allocates power over the cellular links to guarantee $\frac{B}{R_m}\leq T$ at each time slot. Thus, $E^z_{LR}=P^z_{LR}\frac{B}{R_{m}}$. The energy consumption per-packet over the cellular link ${E}_{LR}$ is defined as the total energy spent during the RA procedure until the successful transmission of the first packet over the RACH~\cite{laya2014random}. In such a noncooperative manner, the average per-request energy consumption of each MTD $m$ is given by the series: $\bar{E}_{LR}^{M}=P_s^\mathcal{M} E^z_{LR}+P_s^\mathcal{M} (1-P_s^\mathcal{M}) 2 E^z_{LR}+P_s^\mathcal{M} (1-P_s^\mathcal{M})^{2} 3 E^z_{LR}+...$ which can be written as follow:
\begin{align}
&\bar{E}_{LR}^{M}=\sum_{t=1}^{\infty} P_s^\mathcal{M} (1-P_s^\mathcal{M})^{t-1} t E^z_{LR}=\nonumber\\
& \sum_{t=1}^{\infty} P_s^\mathcal{M} (1-P_s^\mathcal{M})^{t-1} E^z_{LR}+
\sum_{t=2}^{\infty} P_s^\mathcal{M} (1-P_s^\mathcal{M})^{t-1} (t-1) E^z_{LR}=\nonumber\\
& \sum_{t=1}^{\infty} P_s^\mathcal{M} (1-P_s^\mathcal{M})^{t-1} E^z_{LR}+\nonumber\\
&(1-P_s^\mathcal{M}) \big(\sum_{t=2}^{\infty} P_s^\mathcal{M} (1-P_s^\mathcal{M})^{t-2} (t-1) E^z_{LR} \big)= \frac{E^z_{LR}}{P_s^\mathcal{M}}.
\label{average energy}
\end{align}
%\begin{equation}
%\bar{E}_{LR}^{M}=
%  \begin{cases}
%    E^z_{LR} & \quad P_s^\mathcal{M}=1\\
%    \frac{P_s^\mathcal{M}}{1-P_s^\mathcal{M}}E^z_{LR}& \quad P_s^\mathcal{M} \neq 0.\\
%  \end{cases}
%\label{average energy}
%\end{equation}}
%
%\begin{equation}
%\bar{E}_{LR}^{M}=\sum_{t=1}^{\infty} P_s^\mathcal{M} %(1-P_s^\mathcal{M})^{t-1}P_{LR}T=\frac{P_s^\mathcal{M}}{1-P_s^\mathcal{M}}P_{LR}T.
%\label{average energy}
%\end{equation}

Here, if the probability of successful transmission over RACH is equal to 1, $P_s^\mathcal{M}=1$, the average per-request energy consumption of each MTD $m$ will be $E^z_{LR}$ because there is no collision over the RACH.

The massive amount of incoming access requests stemming from MTDs can lead to a low packet success rate of the RACH, increased packet loss, intolerable latency, and increased energy consumption~\cite{laya2014random,larmo2012ran}. For M2M communications, the number of access requests that the MTDs can send is a more important metric than the bit rate or throughput. This is due to the fact that, the MTDs usually need to send data at a very low bit rate (M2M traffic payload size is small) because the size of the messages is generally very short in M2M applications (e.g. very few bits coming from a smart meter or sensor, or even just 1 bit used to inform of the existence or absence of a given event)~\cite{laya2014random}.  Thus, in the presence of a massive number of MTDs, one must develop new approaches to decrease packet loss and energy consumption.  {Consequently, the goal for each MTD $m$ is to minimize two objectives: queue length $Q_m(t)$ and energy consumption $\bar{E}_{TL}$. For each MTD $m$, the objective can be viewed as a multi-objective optimization problem in which the MTD must balance the tradeoff between queue length and energy consumption. According to linear scalarization technique~\cite{hwang2012multiple}, the single objective of a multi-objective optimization scalarized problem is the weighted summation of multiple objectives, with the weights being the parameters of the scalarization. Thus, for each MTD $m$ the objective can be given by:}
\begin{equation}
\min  (\alpha_m Q_{m,t}+\beta_m \bar{E}_{TL}^{M}),
\label{singel tone}
\end{equation}
{where $\alpha_m$ and $\beta_m$ are the weights or preferences of MTD $m$ with respect to the queue length and average energy consumption per-packet, respectively. These parameters are used to adjust the \emph{scales and units} of the queue length (number of packets) and the energy consumption (joule).}
%Consequently, we consider the packet success rate of the RACH and energy consumption as two key objectives for MTDs. Further, we will consider a per MTD payoff function that captures both the queue length which is affected by the packet success rate over the RACH and the energy consumption to send a packet. Let $\alpha_m$ and $\beta_m$ be the preferences of MTD $m$ with respect to the queue length and average energy consumption per-packet, respectively. For example, a lower $\beta_m$ implies that MTD $m$ is less sensitive to extra energy consumption. Consequently, when machines do not cooperate, the per-time slot payoff function of each machine MTD $m$ can be given by:
%\begin{equation}
%u_m(\{m\},Q_m(t))=-\alpha_m Q_m(t)-\beta_m \bar{E}_{TL}^{M}.
%\label{singel tone}
%\end{equation}
%
%The payoff function in (\ref{singel tone}) captures the fact that all machines seek to minimize their queue length, average energy consumption, and signaling overhead for coalition formation. To improve the performance of the RA process, the  MTDs can cooperate and autonomously form cooperative groups during each time slot $t$. Hence, any group of MTDs can cooperate by forming a \emph{coalition} in order to: (i) decrease the number of access requests over the RACH which, in turn, can lead to an improved packet success rate over the RACH during each time slot, and (ii) coordinate their energy consumption for sending access requests to the BS thus resulting in a reduction of the energy consumption.

Formally, a coalition $\mathcal{S} \subseteq \mathcal{M}$ is defined as a subset of $\mathcal{M}$, while a \emph{partition} $\Pi_t=\{\mathcal{S}_1,\mathcal{S}_2,...,\mathcal{S}_{|\Pi_t|}\}$ is a set of mutually disjoint coalitions that span all of $\mathcal{M}$ at time slot $t$.  Whenever a coalition $\mathcal{S}$ of MTDs forms, its members exchange the access requests in their queues , $Q_{m,t}$, over short-range (SR) M2M channels. Thus, the queue of requests in the coalition $\mathcal{S}_i$ is equal to $Q_{\mathcal{S}_i,t}=\sum_{m\in \mathcal{S}_i} Q_{m,t}$. Let $\mathcal{Q}(t)=\{Q_1,Q_2,...,Q_{|\Pi_t|}\}$ be a set that represents the lengths of the queues of all coalitions at time slot $t$. Once an MTD in $\mathcal{S}$ is selected as a coalition head, it will be responsible to send the access request of its coalition's members over the RACH. This coalition head will be referred to as a machine type head (MTH). In our model, the set of MTHs of all coalitions within a partition $\Pi_t$ is $\mathcal{H}_t=\{H_1,H_2,...,H_{|\Pi_t|}\}$ at time slot $t$.

Let $a_{\mathcal{S}_i,t}$ and $d_{\mathcal{S}_i,t}$ be, respectively, the arrival rate of requests to $\mathcal{S}_i$ and the departure rate of requests from coalition $\mathcal{S}_i$ at time slot $t$. During each time slot, $a_{\mathcal{S}_i,t}$ can change from $0$, which means that none of the MTDs in the coalition $\mathcal{S}_i$ is sending an access request, to $|\mathcal{S}_i|$ which implies that all of the MTDs in the coalition $\mathcal{S}_i$ need to send an access request. The probability that $a_{\mathcal{S}_i,t}=n$, where $0\leq n\leq|\mathcal{S}_i|$, is given by:
\begin{equation}
\textrm{Pr}(a_{\mathcal{S}_i,t}=n)=\frac{|\mathcal{S}_i|!}{n!(|\mathcal{S}_i|-n)!}p^n(1-p)^{|\mathcal{S}_i|-n}.
\label{arrival_prob}
\end{equation}

Here, $d_{\mathcal{S}_i,t}$ can be $0$ or $1$, because the head of coalition $\mathcal{S}_i$ can successfully send data or a RACH collision may occur, during each time slot. The probability that $d_{\mathcal{S}_i,t}=1$ is:
\begin{equation}
\textrm{Pr}(d_{\mathcal{S}_i,t}=1)=P_s^{\mathcal{H}_t}=\frac{1}{\mu}\left(1-\frac{1}{\mu}\right)^{|\mathcal{H}_t|-1},
\label{exact success rate}
\end{equation}
where $|\mathcal{H}_t|-1$ is the number of all coalition heads except that of coalition $\mathcal{S}_i$. The queues of all of these $|\mathcal{H}_t|-1$ heads are not empty and all of them want to access to RACH during time slot. Then, the evolution of queue length of coalition $\mathcal{S}_i$ can be given by:
\begin{equation}
Q_{\mathcal{S}_i,t+1}=\max\{Q_{\mathcal{S}_i,t}-d_{\mathcal{S}_i,t}+a_{\mathcal{S}_i,t},k\}.
\label{coalition queue}
\end{equation}

The evolution of the queue length at each MTD $m$ in the coalition $\mathcal{S}_i$, can be given by:
\begin{equation}
Q_{m,t+1}=\max\{Q_{m,t}-w_m^q d_{\mathcal{S}_i,t}+a_{m,t},k\},
\label{member coalition queue}
\end{equation}
where $w_m^q$ is a coefficient that is related to fair scheduling. In this regard, each coalition head applies a fair scheduling scheme to select the request of its coalition's members and, subsequently, send it over RACH. For example, if we use a simple round-robin scheme in which the head will collect sequentially one request from each member in the coalition $\mathcal{S}_i$ from queue $Q_i$, the coefficient in (\ref{member coalition queue}) will be $w_m^q=\frac{1}{|\mathcal{S}_i|}$.

We define $P_{SR}^z$ as transmission power over a direct M2M link between a MTD and MTH. Let $Z_{SR}$ be the subcarriers for M2M links, with each subcarrier having a bandwidth $B_z$. Further, we let $h_{im}^z=H_0|d_{im}|^{-\nu}\xi$ as the channel gain for the M2M link between MTD $m$ and head of the coalition $i$ where $d_{im}$ is the distance between MTDs $i$ and $m$. The achievable rate of the M2M link between MTD $m$ and head of the coalition $i$ for subcarrier $z$ can be given by:
\begin{equation}
R_{im}=B_z\log_2\left(1+\frac{P_{SR}^zh_{im}^z}{N_0+\sum\limits_{n\neq m,z'=z}{h_{nm}^{z'}P_{SR}^{z'}}}\right).
\end{equation}

Here, we assume that M2M communications occur over subcarriers that are orthogonal to the uplink cellular communication links. Thus, there is no interference between uplink cellular links and M2M links. Since M2M links are shared among MTDs to transmit data to the MTHs, there is interference among M2M links. The decoding success probability of MTHs depends on the interference received from other MTDs. Assuming an interference limited regime, each MTH listens successfully receives the MTD packets during one time slot if the channel gains between members of each coalition and allocated power should be high enough to guarantee condition $\frac {B}{R_{im}}\leq T$. The per-packet energy consumption over M2M link ${E}_{SR}$ is defined as the total energy consumed by an MTD to transmit a single fixed-sized packet request to the MTH of its coalition. It is given by ${E}_{SR}=P_{SR}^z\frac{B}{R_{im}}$.

\begin{figure}[t]
\centering
\includegraphics[width=8.0cm,height=2.1in]{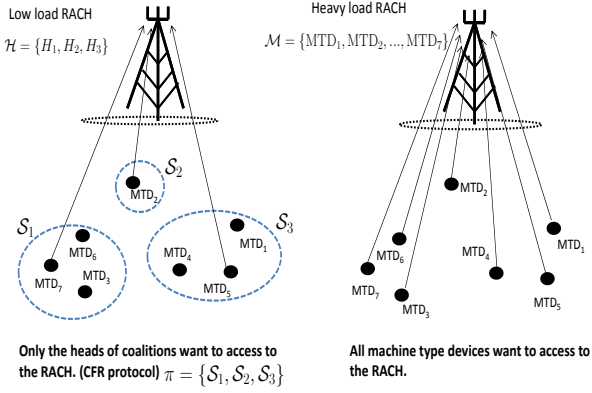}
\caption{An illustration example of stochastic coalition formation and traditional RACH for $\mathcal{M}=7$.}
\label{SModel}
\end{figure}

Fig. \ref{SModel} shows the noncooperative and cooperative random access model of 7 MTDs for M2M applications. Under a noncooperative random access model, all 7 MTDs individually send their access request on their cellular links to BS. In this case, the RACH model is overloaded by all 7 MTDs.  However, under a cooperative random access model, MTDs, which are within the coverage of the short-range M2M links, form a cooperative coalition. Following Fig. \ref{SModel}, MTD $3$, MTD $6$ and MTD $7$ form coalition $\mathcal{S}_1$, MTD $2$ forms coalition $\mathcal{S}_2$ and  MTD $1$, MTD $4$ and MTD $5$ form coalition $\mathcal{S}_3$. Thus, during time slot $t$, the 7 MTDs form cooperative coalitions $\Pi=\{\mathcal{S}_1,\mathcal{S}_2,\mathcal{S}_3\}$. In this case, the RACH load is affected just by the three MTHs of the formed coalitions which is much less than 7 MTDs in noncooperative model.
\subsection{Optimal Solution for Cooperative Random Access in M2M Communications}
We present an optimization formulation for the cooperative random access in M2M communications. {If global network information is available, then the optimal coalitions can be computed centrally at the base station. Given a network partition $\Pi$ composed of $N$ coalitions, then, let $k_{mn}$ be a binary variable such that $k_{mn} = 1$ if $\text{MTD}_m$ belongs to the coalition $\mathcal{S}_n$ otherwise $k_{mn}=0$. The centralized optimization problem will be:}
{\begin{align}
& \underset{\left\{
\substack{\Pi,[k_{mn}]_{M\times N}}
\right\}}
\min
\sum_{m} \sum_{t=t_0}^{\infty} \delta^{t-t_0} Q_{m,t}, \label{altruprob}\\
%&R_\text{MIMO}^{k}\leq R_{D,\mathcal{S}_i} \text{ , } \forall m\in \mathcal{S}_i,\forall \mathcal{S}_i, \label{altruprob_c1}\\
%&\sum_{k\in \mathcal{S}_i} \theta_k^t=1 \text{ , }  \forall t, \forall \mathcal{S}_i, \label{altruprob_c2}\\
& \sum_{t=t_0}^{\infty} \delta^{t-t_0} \big(w_m E_{LR}(\Pi)+E_{SR}(\mathcal{S}_i)\big) \leq E_{\max} \text{, } \forall m \in \mathcal{M}, \label{altruprob_c1}\\
& k_{mn}\times k_{rn}\frac{B}{T} \leq R_{rm} \text{, }  \forall \mathcal{S}_n \in \Pi, \label{altruprob_c2}\\
&\sum_{n} k_{mn}=1 \text{ , } \forall m \in \mathcal{M}, \label{altruprob_c3}\\
&\sum_{m\in \mathcal{M}}\sum_{n \in N} k_{mn}=M, \label{altruprob_c4}\\
&k_{mn} \in \{0,1\} \text{, }  \forall m \in \mathcal{M},\forall \mathcal{S}_i \in \Pi. \label{altruprob_c5}
\end{align}}

{The objective function in (\ref{altruprob}) minimizes the discounted sum of the queue lengths of the MTDs over time. (\ref{altruprob_c1}) guarantees that the discounted sum of per-request energy consumption of all MTDs is less than its predefined maximum value. (\ref{altruprob_c2}) indicates that the bit rate of the M2M link between two MTDs in each coalition should be high enough to allow the transmission of one $B$-bit packet between them. (\ref{altruprob_c3}) shows that each MTD can only be on one coalition. (\ref{altruprob_c4}) guarantees that all MTDs are considered in the coalition formation. (\ref{altruprob_c5}) shows that $k_{mn}$ is a binary. (\ref{altruprob}) is a combinatorial optimization problem with an exponential search space. The reason is that the number of all partitions given by a value known as the Bell number which grows exponentially
with the number of MTDs in the coalitions~\cite{sandholm1999coalition} and~\cite{sen2000searching}. (\ref{altruprob}) can be solved by potentially deterministic exhaustive search algorithm\cite{sandholm1999coalition}. However, due to size of our problem, we apply genetic algorithms which are a class of stochastic search algorithms that have been used widely to solve large-scale NP-complete combinatorial optimization problems including searching for optimal coalition structures\cite{sen2000searching}.}
\section{Stochastic Game-theoretic Model for M2M Cooperation}
\label{Sec:Gam-Model}
In this section, we model the proposed cooperative random access model for M2M communication over the RACH using cooperative game theory~\cite{dawy2015towards}. First, we focus on the cooperation of MTDs during each time slot given the stochastic queue lengths of MTDs and we model it using a \emph{coalition game in stochastic characteristic function form} (CGSC)~\cite{granot1977cooperative}. The goal of this CGSC is to find the formed coalitions during each time slot. Then, for modeling the cooperation of MTDs during different time slots, we use the framework of \emph{stochastic coalitional games} (SCGs)~\cite{kaluski2002n}. An SCG captures how the formation of cooperative coalitions can change due to stochastic factors, such as random arrivals. Thus, in our model, an SCG can be seen as a repeated game with one of its stages being a CGSC that is played by MTDs. Finally, we propose an algorithm to solve the SCG in stochastic characteristic function form and find a stable partition.
\subsection{Coalitional game formulation {during} one time slot}
In each time slot, a coalitional game in stochastic characteristic function form is uniquely defined by the pair ($\mathcal{M},v_t$), where the set $\mathcal{M}$ of players is the set of MTDs and $v_t:\Pi_t\rightarrow \mathbb{R}^{|\mathcal{S}_i|}$ is a stochastic value that reflects the utilities achieved by the coalitions formed at a given time slot $t$. This game is different from a classical coalitional game such as in~\cite{aumann1992handbook} in that the coalitional value is stochastic. First, from (\ref{exact success rate}) and (\ref{coalition queue}), we can see that queue length of each coalition, which depends on the packet success rate over RACH, is affected by other coalitions. Thus, the cooperation model in each time slot can be mapped to a coalitional game in partition form which can be significantly challenging to solve when the value is stochastic~\cite{young2014handbook}. To over come this challenge, the cooperating MTDs will assume that all of the other MTDs at each time slot want to access the RACH and do not form any coalition. Such an assumption maps to a conservative strategy in which the MTDs in each coalition assume the worst-case collision rate from other MTDs. This is in line with existing works such as the jamming games in \cite{xiao2015user}, in which it is assumed that all other players jam a coalition in a cognitive radio network. Consequently, in this worst case scenario, the probability of the packet success rate over RACH, $d_{\mathcal{S}_i,t}=1$, is given by:
\begin{equation}
\textrm{Pr}(d_{\mathcal{S}_i,t}=1)=P_s^{\mathcal{H}}=\frac{1}{\mu}(1-\frac{1}{\mu})^{M-|{\mathcal{S}_i}|}.
\label{approx success rate}
\end{equation}

In this case, we can compute the value of each coalition independently of the coalition decisions of other MTDs. Now, the value of a coalition $S_i$ will depend only on its members, $v_t:\mathcal{S}_i\rightarrow \mathbb{R}^{|\mathcal{S}_i|} $, and, thus, we have a game in characteristic function form~\cite{young2014handbook}.

Given the dynamic changes in queues , $Q_t$, and the probability of the packet success rate, $P_s^{\mathcal{H}_t}$, over the RACH, the value of the coalitions will randomly change over time. Thus, during each time slot $t$, the stochastic value of this game is a random variable which consists of two components ($u_d$, $u_r$)~\cite{granot1977cooperative}: a deterministic value ($u_d$) gained in the current time slot and a random value ($u_r$) that captures the prospective gains in future time slots.
%In the following, we are going to define the stochastic value function $v_t$ of our model.

\subsubsection{Deterministic value for the current time slot}
The deterministic value for the current time slot of a coalition can be directly computed by the members of a coalition since it is a certain outcome. This deterministic component is not related to the random value changes that will happen in future time slots. We consider two cooperation schemes for the proposed RACH coalitional game: \emph{altruistic cooperation}, in which MTDs cooperate to maximize the value of their coalition or \emph{selfish cooperation} in which MTDs may cooperate to increase their individual payoffs.

\textbf{Altruistic cooperation}: altruistic MTDs seek to decrease the overall group queue length and group energy consumption of their formed coalition. In such a scenario, MTDs are mainly concerned with the overall gain of the coalition rather than their individual payoffs. In this case, the deterministic current time slot value of the coalition $u_{d}({\mathcal{S}_i},Q_{\mathcal{S}_i,t})\in \mathbb{R}$ is a real value that is equal to the payoff of each machine MTD $m$ in the coalition $\mathcal{S}_i$. For each MTD $m$ in $\mathcal{S}_i$, the payoff function will be equal to the value function::
\begin{equation}
u_{d}({\mathcal{S}_i},Q_{\mathcal{S}_i,t})=-\alpha_m |Q_{\mathcal{S}_i,t}|-\beta_m \left(\frac{\bar{E}_{LR}^{H}}{|\mathcal{S}_i|}+E_{SR}\right)-\gamma_m |\mathcal{S}_i|,
\label{altruistic coalition}
\end{equation}
where $\gamma_m$ is a unit cost parameter for MTD $m$. $\bar{E}_{TL}^{H}$ and  $Q_{\mathcal{S}_i,t}$ are respectively given by (\ref{average energy}) and (\ref{coalition queue}), when we use (\ref{approx success rate}) as the probability of the packet success rate over the RACH which is same as the probability of departure from the queue.

\textbf{Selfish cooperation}: each selfish MTD will seek to decrease its individual queue length and energy consumption in the formed coalition, while disregarding the overall social welfare of the entire coalition. In this case, the deterministic current time slot value of the coalition is no longer a function over the real line. Instead, the deterministic current time slot value of a coalition is a set of payoff vectors, $\mathcal{U}_{d}({\mathcal{S}_i},Q_{\mathcal{S}_i,t})\subseteq \mathbb{R}^{|\mathcal{S}_i|}$, where each element $u_{d,m}({\mathcal{S}_i},Q_{\mathcal{S}_i,t})$ of any vector $\boldsymbol{u}_{d}\in \mathcal{U}_{d}$ represents the payoff of each MTD $m$ in the coalition $\mathcal{S}_i$, which is given by:
% where each element of each vector indicating the payoff of each MTD $m$ in the coalition $\mathcal{S}_i$. This deterministic current time slot value is $\boldsymbol{u}_{d}:({\mathcal{S}_i},Q_{\mathcal{S}_i,t})\rightarrow \mathbb{R}^{|\mathcal{S}_i|}$. Each element $u_{d,m}({\mathcal{S}_i},Q_{\mathcal{S}_i,t})$ of a vector $\boldsymbol{u}_{d}$, $u_{d,m}({\mathcal{S}_i},Q_{\mathcal{S}_i,t})\in \boldsymbol{u}_{d}$, represents the payoff of each MTD $m$ in $\mathcal{S}_i$, which is given by:
\begin{equation}
u_{d,m}({\mathcal{S}_i},Q_{\mathcal{S}_i,t})=-\alpha_m |Q_{m,t}|-\beta_m (w_m^e \bar{E}_{LR}^{H}+E_{SR})-\gamma_m |\mathcal{S}_i|,
\label{selfish coalition}
\end{equation}
where $\bar{E}_{TL}^{H}$ and  $Q_{m,t}$ are respectively given by (\ref{average energy}) and (\ref{member coalition queue}) when we use (\ref{approx success rate}) as a probability of the RACH packet success rate (which is the same as the probability of departure from the queue) and $w_m^e$ is the fairness weight for time of being head.
\subsubsection{Random value for future time slots}
To capture the random component of the value of each coalition, we consider discounted future rewards for each MTD. For each MTD $m$ at time slot $t$, a discounted future reward is the sum of future payoffs during the following time slots, $t+1,t+2,...$, which are discounted by a constant factor~\cite{shoham2008multiagent}. If we consider the same discount factor $\delta$, $0\leq \delta \leq 1$, for all MTDs, then a higher $\delta$ will reflect MTDs that are farsighted and, thus, more interested in the payoff of the next time slots rather than the current one.  Thus, the random values $u_r(\mathcal{S}_i,Q_{\mathcal{S}_i,t})$ for the altruistic scheme, and $u_{r,m}(\mathcal{S}_i,Q_{\mathcal{S}_i,t})$ for the selfish scheme that capture the payoff during future time slots, will be given by:
\begin{align}
& u_r(\mathcal{S}_i,Q_{\mathcal{S}_i,t})=\sum_{n=t+1}^{n_{\delta}}\delta^{n-t} \bar{u}_d(\mathcal{S}_i,Q_{\mathcal{S}_i,t\rightarrow n})\nonumber \\
& u_{r,m}(\mathcal{S}_i,Q_{\mathcal{S}_i,t})=\sum_{n=t+1}^{n_{\delta}}\delta^{n-t} \bar{u}_{d,m}(\mathcal{S}_i,Q_{\mathcal{S}_i,t\rightarrow n})
\label{Value_fun}
\end{align}

Here, $n_{\delta}$ is a discrete value where $\delta^{m}\simeq 0$ for $n_{\delta} \leq m$. $u_d(\mathcal{S}_i,Q_{\mathcal{S}_i,t\rightarrow n})$ and $u_{d,m}(\mathcal{S}_i,Q_{\mathcal{S}_i,t\rightarrow n})$ are stochastic processes which stochastically change from current time slot $t$ to the future time slot $n$. $\bar{u}_d(\mathcal{S}_i,Q_{\mathcal{S}_i,t\rightarrow n})$ and $\bar{u}_{d,m}(\mathcal{S}_i,Q_{\mathcal{S}_i,t\rightarrow n})$ are the expectation of random changes of $u_d(\mathcal{S}_i,Q_{\mathcal{S}_i,t\rightarrow n})$ and $u_{d,m}(\mathcal{S}_i,Q_{\mathcal{S}_i,t\rightarrow n})$  from time slot $t$ to the time slot $n$. These values depend on the change in the queue during $n-t$ time slots, $Q_{\mathcal{S}_i,t\rightarrow n}$. The stochastic change of the queue in one time slot, $Q_{\mathcal{S}_i,t\rightarrow t+1}$, is modeled by the probabilistic model shown in Fig. \ref{Probability_model}. Since the arrivals and departures of requests from each coalition are independent event, $\text{Pr}(d_{i,t}=i,a_{i,t}=j)=\text{Pr}(d_{i,t}=i)\times \text{Pr}(a_{i,t}=j)$ where $\text{Pr}(d_{i,t}=i)$ and $\text{Pr}(a_{i,t}=j)$ are given by (\ref{arrival_prob}) and (\ref{exact success rate}). In general,  $\bar{u}_d(\mathcal{S}_i,Q_{\mathcal{S}_i,t\rightarrow n})$ is:
\begin{figure}[t]
\centering
\includegraphics[width=8.0cm,height=1.25in]{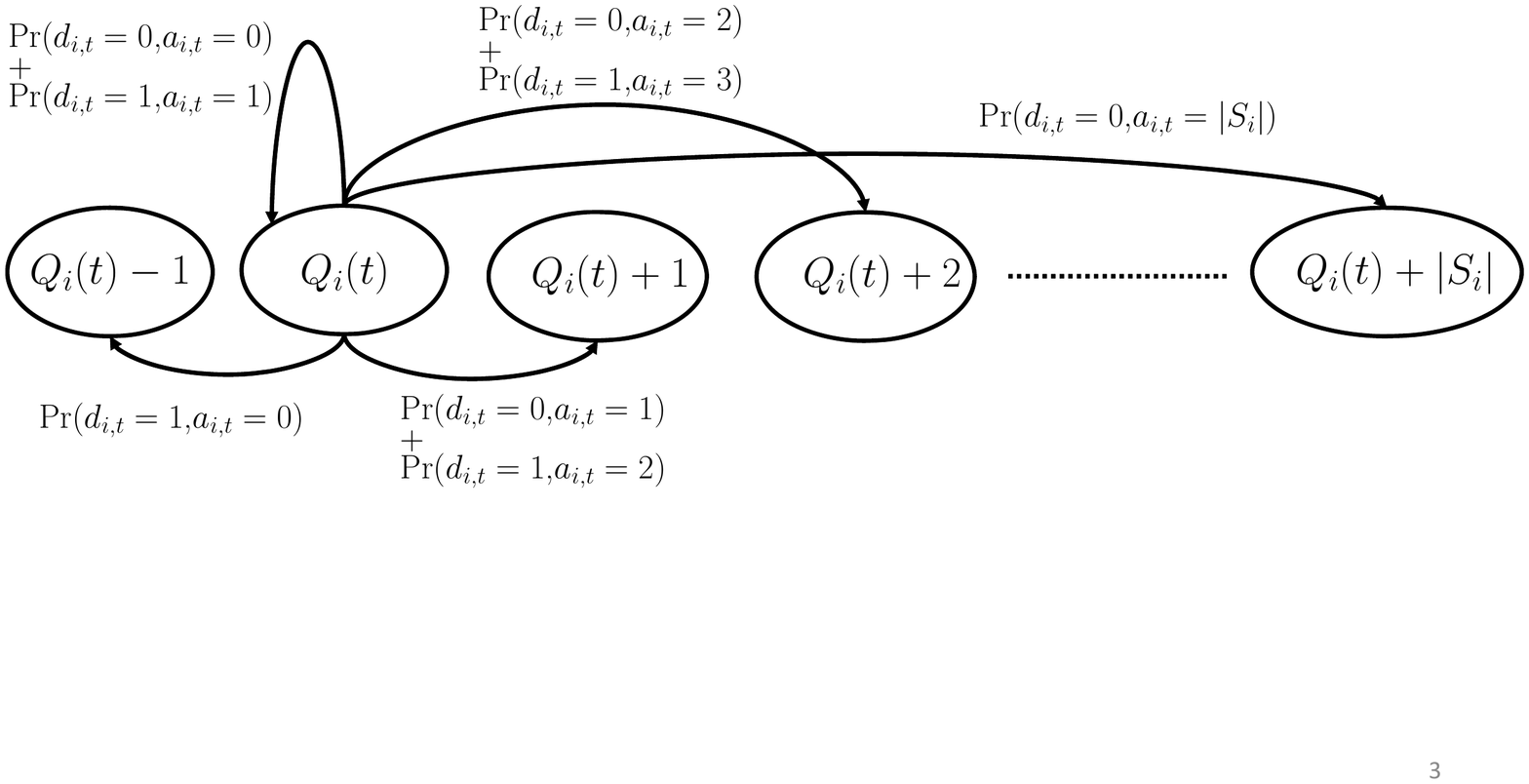}
\caption{The transition probability model for queue changes during one time slot.}
\label{Probability_model}
\end{figure}
\begin{equation*}
\bar{u}_d(\mathcal{S}_i,Q_{\mathcal{S}_i,t\rightarrow n})=\Delta_{t}^{n} u_d({\mathcal{S}_i},Q_{\mathcal{S}_i,n})\text{, for the altruistic scheme, and}
%\label{U_trans}
\end{equation*}
\begin{equation}
\bar{u}_{d,m}(\mathcal{S}_i,Q_{\mathcal{S}_i,t\rightarrow n})=\Delta_{t}^{n} u_{d,m}({\mathcal{S}_i},Q_{\mathcal{S}_i,n})\text{, for the selfish scheme.}
\label{U_trans}
\end{equation}
Here, $u_d({\mathcal{S}_i},Q_{\mathcal{S}_i,n})$ and $u_{d,m}({\mathcal{S}_i},Q_{\mathcal{S}_i,n})$ are given by (\ref{altruistic coalition}) and (\ref{selfish coalition}), respectively, and $\Delta_{t}^{n}$ is the probability of transition from $Q_{\mathcal{S}_i,t}$ to $Q_{\mathcal{S}_i,n}$ during $n-t$ time slots. This probability depends on the $(n-t)$-hop path from $Q_{\mathcal{S}_i,t}$ to $Q_{\mathcal{S}_i,n}$ in Fig.~\ref{Probability_model}. This path reflects how the queue of coalition $\mathcal{S}_i$ changes during $n-t$ time slots, $Q_{\mathcal{S}_i,t}\rightarrow Q_{\mathcal{S}_i,n}$. The probability of change in the queue is equal to the sum of probabilities over all paths from $Q_{\mathcal{S}_i,t}$ to $Q_{\mathcal{S}_i,n}$:
\begin{equation}
\Delta_{t}^{n}=\sum_{\text{all paths: }Q_{\mathcal{S}_i,t}\rightarrow Q_{\mathcal{S}_i,n}} \textrm{Pr}({Q_{\mathcal{S}_i,t}\rightarrow Q_{\mathcal{S}_i,n}}).
\label{Cal_delta}
\end{equation}
From Fig.~\ref{Probability_model}, we can see that there is a one-hop path between $Q_{\mathcal{S}_i,t}$ and $Q_{\mathcal{S}_i,t+1}$. This path captures the stochastic change of a queue during one time slot. This one-hop path reflect the event during which queue $Q_{\mathcal{S}_i,t}$ changes to $Q_{\mathcal{S}_i,t}+j$ where $j=-1,0,...,|\mathcal{S}_i|$. The probability that the queue changes due to a one-hop path, $p(Q_{\mathcal{S}_i,t}\rightarrow Q_{\mathcal{S}_i,t}+j)$, can be found in Fig. \ref{Probability_model}. For $\Delta_{t}^{t+2}$, the number of paths becomes greater than one. For example, if $Q_{\mathcal{S}_i,t+2}=Q_{\mathcal{S}_i,t}+3$ then there are $|\mathcal{S}_i|+2$ two-hop paths during 2 time slots to change the queue length from $Q_{\mathcal{S}_i,t}$ to $Q_{\mathcal{S}_i,t}+3$, which are $Q_{\mathcal{S}_i,t}\rightarrow Q_{\mathcal{S}_i,t}+i\rightarrow Q_{\mathcal{S}_i,t}+3$ for $i=-1,0,...,|\mathcal{S}_i|$. Then, the probability of each path is calculated by multiplying the probability of two hops of Fig.~\ref{Probability_model}.

After computing the deterministic value for current time slot and the random value for future time slots, we can explicitly define the value of each coalition or payoff function of each MTD. For instance, when MTDs are altruistic, the value of coalition $\mathcal{S}_i$ is equal to the payoff function of each MTD in the coalition as follows:
\begin{equation}
v^a(\mathcal{S}_i,Q_{\mathcal{S}_i,t})=u_{d}({\mathcal{S}_i},Q_{\mathcal{S}_i,t_0})+u_{r}({\mathcal{S}_i},Q_{\mathcal{S}_i,t_0}).
\label{altruistic Value_fun}
\end{equation}

When MTDs are altruistic, the value of each coalition $\mathcal{S}_i$  will be a real value that is equal to the benefit achieved by each individual MTD. If the MTDs are selfish, then the deterministic current time slot value and random value for future time slots will be, respectively, a set of payoff vectors $\mathcal{U}_{d}({\mathcal{S}_i},Q_{\mathcal{S}_i,t})$ and $\mathcal{U}_{r}({\mathcal{S}_i},Q_{\mathcal{S}_i,t})\subseteq\mathbb{R}^{|\mathcal{S}_i|}$. Thus, the value of a coalition, which is the sum of the deterministic current time slot value and random value for future time slots, becomes a set of vectors, $\mathcal{V}^s({\mathcal{S}_i},Q_{\mathcal{S}_i,t})\subseteq \mathbb{R}^{|\mathcal{S}_i|}$ with each element $v_{m}^s(\mathcal{S}_i,Q_{\mathcal{S}_i,t})$ of any vector $\boldsymbol{v}^s \in \mathcal{V}^s$ is the payoff function of each MTD in the coalition. The payoff function of a given MTD $m$ in coalition $\mathcal{S}_i$ is given by:
\begin{equation}
v_{m}^s(\mathcal{S}_i,Q_{\mathcal{S}_i,t})=u_{d,m}({\mathcal{S}_i},Q_{\mathcal{S}_i,t_0})+u_{r,m}({\mathcal{S}_i},Q_{\mathcal{S}_i,t_0}).
\label{selfish Value_fun}
\end{equation}

Following (\ref{altruistic Value_fun}) and (\ref{selfish Value_fun}), in the CGSC, the value of a coalition and the payoff of each MTD will depend on the formed coalition and its queue. Since the queue cannot be arbitrary shared among members of coalition, the CGSC is a game with non-transferable utility (NTU)~\cite{young2014handbook}.
\subsection{Random access as an $M$-person stochastic coalition game}
Next, we model the cooperation of MTDs during different time slots as an SCG~\cite{kaluski2002n}. In each stage of this SCG, the cooperation of MTDs is modeled using a CGSC~\cite{granot1977cooperative}. The players in the SGC are also the MTDs in the set $\mathcal{M}$. The SCG is played in such a way that, at each time slot $t$, the MTDs decide on their membership in a formed coalition $\Pi(t)$. Their decisions depend on the random conditions of the game. The state of the SCG at a slot $t$ is denoted by $h_t=(\Pi(t),\mathcal{Q}(t))$. This state is a two-dimensional random variable with discrete finite states which means MTDs form coalitions $\Pi(t)$ while $\mathcal{Q}(t)$ is the set of queue of the coalitions. $\mathcal{Q}(t)$ is stochastically changing at each time slot $t$. Consequently, the RACH \emph{$M$-person stochastic coalition game in stochastic characteristic function} (MSCF) is uniquely defined by the triplet $(\mathcal{M},v_t,h_t)$. Thus, the MTDs choose the actions that lead to forming disjoint coalitions and each MTD receives a stochastic payoff $v_t$ at each time slot $t$ which is given by (\ref{altruistic Value_fun}) or (\ref{selfish Value_fun}).
%\subsubsection{Proposed Stochastic Coalition Formation Algorithm}

The formed MTD coalitions may randomly change during different time slots, because the arrival and departure processes of the coalitions' queues can stochastically change. Therefore, at each stage of an MSCF, the MTDs form the most suitable coalitions depending on the queues. Let $h_t=(\Pi(t),\mathcal{Q}(t))$ be the state of the coalition formation process at time slot $t$. As shown in ~\cite{aumann1992handbook}, the process of coalition formation is a stochastic process $\zeta$ which starts from an initial state $h_0$ and moves to another state $h_t$ following the stochastic changes in the queues of the coalitions when going form the time slot $0$ to the time slot $t$. Our goal is to find a stable state of $\zeta$ which is essentially a stable partition, $\Pi^*$, of $\mathcal{M}$. Consequently, we can find how coalitions are formed by MTDs given the stochastic changes of their queues.

First, we define the moves or decisions that are going to be used in our proposed algorithm. Consider a state $h_t=(\Pi(t),\mathcal{Q}(t))$ and a coalition $\mathcal{S}_i$, then, we make the following definition:
\begin{definition}
\textnormal{Let $\mathcal{F}_{\mathcal{S}_i}(h_t)$ be the set of states achievable by a \emph{one-step coalitional move} (by $\mathcal{S}_i$) which changes the coalition formation, $\Pi(t)$, when an $M$-person stochastic coalition game in stochastic characteristic function form is in state $h_t$.}
\end{definition}
There are three types of moves for each MTD $m$ at each time slot $t$. The first is $\mathcal{S}_i$, which means that the members of $\mathcal{S}_i$ do not change their coalition. The second type of moves for each MTD $m$ is $\mathfrak{C}_m=\{\mathcal{C}_1,\mathcal{C}_2,...,\mathcal{C}_{|\mathcal{S}_i|-1}\}$, where $\mathcal{C}_k$ is a $k$-person coalition for MTD $m$ that consists of MTD $m$ and $k-1$ members from $\mathcal{S}_i$. $\mathfrak{C}_m$ is the set of all possible $k$-person coalitions that MTD $m \in \mathcal{S}_i$ can form with $k < |\mathcal{S}_i|$ other members in the coalition $\mathcal{S}_i$. The total number of moves in $\mathfrak{C}_m$ is $|\mathfrak{C}_m|=\sum_{k=1}^{|\mathcal{S}_i|-1} \mathcal{C}_k$ where $|\mathcal{C}_k|=\frac{(|\mathcal{S}_i|-1)!}{(k-1)!(|\mathcal{S}_i|-k-2)!}$. The last type of move occurs when MTD $m$ in the coalition $\mathcal{S}_i$ asks other coalition $\mathcal{S}_j$ to form a new larger coalition which is $\mathcal{S}_i\cup\ \mathcal{S}_j$. Consequently, the total of one-step move for each MTD $m$ in the coalition $\mathcal{S}_i$ given by:
\begin{equation}
\mathcal{F}_{\mathcal{S}_i}^m(h_t)=\mathcal{S}_i\cup\ \mathfrak{C}_m\cup\ \{{\mathcal{S}_{ij}|\forall j\neq i}\}.
\end{equation}
Therefore, a one-step coalitional move is equal to the union of the one-step moves of all of the MTDs in $\mathcal{S}_i$ denoted by $\mathcal{F}_{\mathcal{S}_i}(h_t)=\{\cup _{m\in \mathcal{S}_i}\mathcal{F}_{\mathcal{S}_i}^m(h_t)\}$. After determining all possible one-step coalitional moves, we can now define the concept of a profitable move~\cite{konishi2003coalition}:
\begin{definition}
\textnormal{$\mathcal{S}_i$ has a \emph{(weakly) profitable move} from $\Pi_t^1$ (under $\zeta$) if there is $\Pi_t^2\in \mathcal{F}_{\mathcal{S}_i}(h_t)$ (with $\Pi^2(t)\neq \Pi^1(t)$) such that $v_m(\Pi^2(t),\zeta)\geq v_m(\Pi^1(t),\zeta)$ for all $m \in \mathcal{S}_i$. $\mathcal{S}_i$ has a strictly profitable move from $\Pi^1(t)$ if there is $\Pi^2(t)\in \mathcal{F}_{\mathcal{S}_i}(h_t)$ such that $v_m(\Pi^2(t),\zeta)>v_m(\Pi^1(t),\zeta)$  for all $m \in \mathcal{S}_i$.}
\end{definition}

By using the notion of a profitable move, we will propose an $M$-person stochastic coalition formation algorithm that can be used to enable the MTDs to perform distributed coalition formation under a stochastic characteristic function form. The phases of algorithm are:

\textit{Initial Phase}: At each time slot $t_0$, the initial partition can be a singleton network partition: $\Pi(t_0) = \{\{1\}, \{2\},..., \{M\}\}$ or a grand coalition:$\Pi(t_0)=\{1,2,...,M\}$.

\textit{Sequential Phase}: The stochastic coalition formation algorithm keeps on iterating over all the MTDs in the network until all MTDs decide to stay in their current coalition, which indicates that the
algorithm has converged to a final stable network partition $\Pi^*(t_0)$. In each iteration, each MTD $m$ member of a the coalition $\mathcal{S}_i$ follows its most preferable one-step coalitional move which is in $\mathcal{F}_{\mathcal{S}_i}^m(h_t)$.

A summary of the stochastic coalition formation algorithm is presented in Table~\ref{SIDG}.
\begin{table}[!t]
%\scriptsize
  \centering
  \caption{%\mycaption{%\vspace*{-1em}
    \vspace*{-0em}Stochastic Coalition Formation Algorithm for M2M Communication}\vspace*{-0em}
    \begin{tabular}{p{8cm}}
      \hline \vspace*{-0em}
      \textbf{Inputs:}\,\,$\mathcal{M},\Pi(t_0),\mathcal{Q}(t_0)$\\
\hspace*{1em}\textit{Initialize:}   \vspace*{0em}

Set initial state and discount factor as $h_{t_0}=\{\Pi(t_0),\mathcal{Q}(t_0)\}$ and $\delta$, respectively.\vspace*{-0cm}

Find the discrete value $n_{\delta}$ where $\delta^{m}\simeq 0$ for $n_{\delta} \leq m$.

%\vspace*{-0cm}
\hspace*{0em} \textit{Stage 1:}
\begin{itemize}\vspace*{-0em}
\item[] \hspace*{0em}(a) For each coalition $\mathcal{S}_i$, calculate stochastic change of the queue, $Q_{\mathcal{S}_i,t_0\rightarrow t_0+n_{\delta}}$, until time slot $t_0+n_\delta$,  using the transition probability model in Fig. 2.
\item[] \hspace*{0em}(b) Determine the value functions of MTDs in the partition $\Pi(t_0)$, using (\ref{Value_fun}).
\item[] \hspace*{0em}(c) For each MTD $m\in\mathcal{S}_i$, find $\mathfrak{C}_m$ including all possible $k$-person coalitions.
    \item[] \hspace*{0em}(d) For each MTD $m\in\mathcal{S}_i$, find all possible coalitions such as $\mathcal{S}_j$ that forming $\mathcal{S}_i\cup\ \mathcal{S}_j$ is a profitable move.
\end{itemize}\vspace*{-0cm}
\hspace*{0em}\textit{Stage 2:}
\begin{itemize}\vspace*{-0em}
\item[]  \hspace*{0em}(a) For each coalition $\mathcal{S}_k\in \mathcal{F}_{\mathcal{S}_i}^m(h_t)$, calculate stochastic change of the queue, $Q_{\mathcal{S}_k,t\rightarrow t+n_{\delta}}$, until time slot $t_0+n_\delta$, using the transition probability model in Fig. 2.
\item[]  \hspace*{0em}(b) Do the most preferable one-step coalitional move in $\mathcal{F}_{\mathcal{S}_i}(h_t)$.
\end{itemize}\vspace*{-0cm}
\hspace*{0em}\textit{Stage 3:}
\hspace*{0em}\textbf{while} $\Pi(t_0)$ changes for two consecutive iterations\\
\hspace*{0em}\textit{repeat Stage 1 to Stage 2}\vspace*{0em}\\
\hspace*{0em}\textbf{Output:}\,\,Stably formed coalition: $\Pi^*$\vspace*{0em}\\
   \hline
    \end{tabular}\label{tab:algo}\vspace{-0.5cm}
\label{SIDG}
\end{table}
Next, we analyze the convergence and stability of the proposed algorithm. First, we define a deterministic process of coalition formation as follow~\cite{konishi2003coalition}:
\begin{definition}
\textnormal{For all formed coalitions $\Pi_t^1$ and $\Pi_t^2$, a process of coalition formation is said to be \emph{deterministic} if the transition probability from coalition $\Pi_t^1$ to coalition $\Pi_t^2$ in each time slot $t$, $p(\Pi_t^1,\Pi_t^2)$ is in $\{0,1\}$.}
\end{definition}
\begin{lemma}
\textnormal{The proposed $M$-person stochastic coalition formation algorithm in Table~\ref{SIDG} is a deterministic process of coalition formation.}
\label{Theor_Determ}
\end{lemma}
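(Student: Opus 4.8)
The plan is to show that, once the current state $h_t=(\Pi(t),\mathcal{Q}(t))$ is fixed, the partition produced at the next step of the algorithm in Table~\ref{SIDG} is \emph{uniquely} determined. Since Definition~3 calls a coalition formation process deterministic exactly when every transition probability $p(\Pi_t^1,\Pi_t^2)$ lies in $\{0,1\}$, it suffices to exhibit, for each state, a single-valued rule that maps the current partition to the next one; then $p(\Pi_t^1,\Pi_t^2)=1$ for the image and $0$ for any other partition.

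First I would argue that none of the decisions taken inside an iteration depends on a realized (random) queue value, only on $h_t$ itself. Even though the queue evolution $Q_{\mathcal{S}_i,t\rightarrow n}$ is stochastic, Stage~1(b) and Stage~2 evaluate \emph{expected} payoffs: by (\ref{Value_fun})--(\ref{Cal_delta}), the future component $u_r$ (resp.\ $u_{r,m}$) equals $\sum_{n=t+1}^{n_\delta}\delta^{n-t}\,\Delta_t^n\,u_d(\mathcal{S}_i,Q_{\mathcal{S}_i,n})$, where each transition probability $\Delta_t^n$ is obtained deterministically by summing path probabilities read off Fig.~\ref{Probability_model}. Hence, given $Q_{\mathcal{S}_i,t}$, the coalition value $v^a$ and every member payoff $v_m^s$ are fixed real numbers; the feasible move set $\mathcal{F}_{\mathcal{S}_i}^m(h_t)$ of Definition~1 is a finite set determined by $h_t$; and whether a move is weakly or strictly profitable in the sense of Definition~2 is a deterministic predicate of $h_t$.

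Second I would show the update is single-valued. The algorithm visits the MTDs in a prescribed order, and each visited MTD $m\in\mathcal{S}_i$ carries out its \emph{most preferable} one-step coalitional move in the finite set $\mathcal{F}_{\mathcal{S}_i}^m(h_t)$; this is the maximization of a deterministic payoff over a finite collection, so once a tie-breaking rule is fixed (for instance, prefer leaving $\mathcal{S}_i$ unchanged, and otherwise take the candidate coalition of smallest index) the chosen move is unique. Thus every elementary transition of the process $\zeta$ from one partition to the next is deterministic, and composing them shows the entire run from $\Pi(t_0)$ until all MTDs decide to stay put produces a well-defined output $\Pi^*(t_0)$. In particular $p(\Pi_t^1,\Pi_t^2)\in\{0,1\}$ for every pair of partitions, which is exactly Definition~3.

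The part that needs care is not a computation but making ``the most preferable one-step coalitional move'' genuinely well-defined: one must commit to (i) the order in which the MTDs are examined and (ii) a deterministic tie-breaking rule among equally good moves, and then check that with these conventions the per-iteration update and the stopping condition of Stage~3 define a function rather than a correspondence. Once that bookkeeping is pinned down, determinism is immediate, since the algorithm acts on the expectations $\bar u_d,\bar u_{d,m}$ rather than on any sampled queue trajectory.
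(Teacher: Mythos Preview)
Your proposal is correct and follows essentially the same reasoning as the paper's proof: because each step of the algorithm selects a move based on computed (expected) payoffs rather than on sampled queue trajectories, the transition from any $\Pi_t^1$ to the next partition occurs with probability $0$ or $1$. Your treatment is in fact more careful than the paper's short argument, since you explicitly isolate the need for a fixed visitation order and a tie-breaking rule to make ``most preferable'' single-valued, a point the paper's proof leaves implicit.
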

\begin{proof}
During the coalition formation process in Table~\ref{SIDG}, when coalition $\Pi^1_t$ is formed at a given step, if there is at least one profitable move for MTDs, the algorithm in Table~\ref{SIDG} will go to coalition $\Pi^2_t$ in the next step with probability one. If there is no profitable move for MTDs, the probability to change the formed coalition from $\Pi^1_t$ to $\Pi^2_t$ is zero. Therefore, the algorithm in Table~\ref{SIDG} deterministically changes the coalition formation process.
\end{proof}
\begin{definition}
\textnormal{Under a stochastic change of the coalition value, a formed partition $\Pi^*$ is said to be \emph{stable}, if no MTD $m$ in the coalition $\mathcal{S}_i^*$ nor any group of MTDs can form another coalition. For a stable coalition, $\Pi^*$, the following conditions are satisfied for all $\mathcal{S}_i^*\text{ and } \mathcal{S}_j^*\in \Pi^*$:
\begin{equation*}
v_m(\mathcal{S}_i^*,Q_{\mathcal{S}_i,t})\geq v_m(\mathcal{C}_k,Q_{\mathcal{C}_k,t})\text{, } k=1,2,...,|\mathcal{S}_i^*|-1,
\end{equation*}
\begin{equation}
v_m(\mathcal{S}_i^*,Q_{\mathcal{S}_i,t})\geq
v_m(\mathcal{S}_i^*\cup \mathcal{S}_j^*,Q_{\mathcal{S}_i,t}+Q_{\mathcal{S}_j,t}).
\label{stable_cond}
\end{equation}
}
\label{Def_stability}
\end{definition}

Based on (\ref{stable_cond}), in a stable partition, $\Pi^*$, no MTD $m$ in $\mathcal{S}_i^*$ can benefit by forming any $k$-person coalition that is different from the current coalition $\mathcal{S}_i^*$ nor by enabling its coalition to form a new, bigger coalition with other coalition $\mathcal{S}_j^*$. Thus, in a stable partition, each MTD $m$ prefers to stay in its current coalition. As shown in~\cite[Theorem 4.1]{konishi2003coalition}, for any deterministic process of coalition formation in characteristic function form, there exists a discount factor $\delta^*\in(0,1)$ such that for any collection of discount factors in $(\delta^*,1)$, that deterministic process converges to the unique limit which is the coalitions that will effectively form. According to~\cite[Theorem 4.1]{konishi2003coalition} and Lemma~\ref{Theor_Determ}, we can state the following result for our proposed algorithm in Table I.
\begin{theorem}
\textnormal{For the proposed deterministic $M$-person stochastic coalition formation algorithm in Table I, there exists a discount factor $\delta^*$ such that for any collection of discount factors in $(\delta^*,1)$, the unique limit of algorithm in Table I will be $\Pi^*$ with $v(\mathcal{S}_i^*,Q_{\mathcal{S}_i,t})$ . $\delta^*$ can be found from the following conditions:
\begin{align}
& \sum_{n=t+1}^{\infty}\delta^{n-t}\big(\Delta_{t}^{n} u_d({\mathcal{S}_i^*},Q_{\mathcal{S}_i^*,n})-\Delta_{t}^{n} u_d({\mathcal{C}_k},Q_{\mathcal{C}_k,n})\big)
\geq \nonumber \\
&u_{d,m}({\mathcal{C}_k},Q_{\mathcal{C}_k,t})-u_{d,m}({\mathcal{S}_i^*},Q_{\mathcal{S}_i^*,t}),\nonumber \\
& \forall \mathcal{S}_i^* \in \Pi^*, k=1,2,...,|\mathcal{S}_i^*|-1,\text{ and,}\nonumber \\
&\sum_{n=t+1}^{\infty}\delta^{n-t}\big(\Delta_{t}^{n} u_d({\mathcal{S}_i^*},Q_{\mathcal{S}_i^*,n})-\Delta_{t}^{n} u_d(\mathcal{S}_i^*\cup\ \mathcal{S}_j^*,Q_{\mathcal{S}_i^*\cup\ \mathcal{S}_j^*,n})\big)
\geq \nonumber \\
&u_{d,m}(\mathcal{S}_i^*\cup\ \mathcal{S}_j^*,Q_{\mathcal{S}_i^*\cup\ \mathcal{S}_j^*,t})-
u_{d,m}({\mathcal{S}_i^*},Q_{\mathcal{S}_i^*,t}),\nonumber \\
& \forall \mathcal{S}_j^* \in \Pi^*, \mathcal{S}_j^*\neq\mathcal{S}_i^*.
\end{align}
}
\label{Theor_band_delta}
\end{theorem}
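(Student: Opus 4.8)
The plan is to obtain the statement as a direct consequence of Lemma~\ref{Theor_Determ} and~\cite[Theorem 4.1]{konishi2003coalition}, and then to make the threshold $\delta^*$ explicit by unfolding the stability conditions of Definition~\ref{Def_stability}. By Lemma~\ref{Theor_Determ}, the algorithm of Table~\ref{SIDG} induces a deterministic process of coalition formation: from any formed partition $\Pi^1_t$ the process moves to some $\Pi^2_t$ with probability one when a profitable move exists, and stays put otherwise. Moreover, the conservative approximation~\refeq{approx success rate} removes the dependence of each coalition's value on the coalitions chosen by the other MTDs, so the per-slot game $(\mathcal{M},v_t)$ is in characteristic function form. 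Hence the hypotheses of~\cite[Theorem 4.1]{konishi2003coalition} hold, and there exists $\delta^*\in(0,1)$ such that, for every collection of discount factors in $(\delta^*,1)$, the process started from $\Pi(t_0)$ converges to a unique limit $\Pi^*$ with associated value $v(\mathcal{S}_i^*,Q_{\mathcal{S}_i,t})$. Since $\Pi^*$ is a fixed point of the process, no coalition $\mathcal{S}_i^*\in\Pi^*$ has a weakly profitable one-step move in $\mathcal{F}_{\mathcal{S}_i^*}(h_t)$; by the definition of the move set this is precisely stability in the sense of Definition~\ref{Def_stability}, i.e.\ the inequalities~\refeq{stable_cond} hold at $\Pi^*$.

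To make $\delta^*$ explicit, I would rewrite~\refeq{stable_cond} using the decomposition $v_m(\mathcal{S}_i,Q_{\mathcal{S}_i,t})=u_{d,m}(\mathcal{S}_i,Q_{\mathcal{S}_i,t})+u_{r,m}(\mathcal{S}_i,Q_{\mathcal{S}_i,t})$ from~\refeq{altruistic Value_fun}--\refeq{selfish Value_fun} and then substitute $u_{r,m}(\mathcal{S}_i,Q_{\mathcal{S}_i,t})=\sum_{n=t+1}^{n_\delta}\delta^{n-t}\,\Delta_t^n\,u_{d,m}(\mathcal{S}_i,Q_{\mathcal{S}_i,n})$ from~\refeq{Value_fun} and~\refeq{U_trans}, where the weights $\Delta_t^n$ are the transition probabilities~\refeq{Cal_delta} read off Fig.~\ref{Probability_model}. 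Moving the deterministic current-slot terms to the right-hand side, the stability condition $v_m(\mathcal{S}_i^*,\cdot)\geq v_m(\mathcal{C}_k,\cdot)$ turns into the first displayed inequality and the merge condition $v_m(\mathcal{S}_i^*,\cdot)\geq v_m(\mathcal{S}_i^*\cup\mathcal{S}_j^*,\cdot)$ turns into the second; replacing the finite summation limit $n_\delta$ by $\infty$ is harmless because $\delta^{n}\simeq 0$ beyond $n_\delta$. Thus $\delta^*$ is characterized as the smallest discount value for which both families of inequalities hold at the resulting partition for every $m$, every $\mathcal{S}_i^*$, every $k=1,\dots,|\mathcal{S}_i^*|-1$ and every $\mathcal{S}_j^*\neq\mathcal{S}_i^*$.

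I expect the main obstacle to be the bridge between the abstract convergence result of~\cite[Theorem 4.1]{konishi2003coalition} and the present NTU stochastic setting: one must check that, after the worst-case collision approximation~\refeq{approx success rate}, the per-slot game really meets the structural hypotheses of that theorem --- a finite state space of partitions, well-defined one-step profitable moves, and the monotone role of $\delta$ that makes the discounted stream of future expected payoffs eventually outweigh any one-shot deviation gain as $\delta\to 1$. The remaining work is routine bookkeeping: verifying that truncating the reward stream at $n_\delta$ does not move $\delta^*$, and that expanding $u_r$ via~\refeq{Value_fun} and~\refeq{U_trans} reproduces exactly the coefficients $\delta^{n-t}\Delta_t^n$ appearing in the displayed conditions.
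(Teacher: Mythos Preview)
Your proposal is correct and follows essentially the same approach as the paper: the existence of $\delta^*$ and convergence to a unique limit $\Pi^*$ are imported from Lemma~\ref{Theor_Determ} together with~\cite[Theorem~4.1]{konishi2003coalition}, and the explicit characterization of $\delta^*$ is obtained by unfolding the stability inequalities~\refeq{stable_cond} via the decomposition $v_m=u_{d,m}+u_{r,m}$ and the substitutions~\refeq{Value_fun},~\refeq{U_trans}. The paper's own proof in the appendix carries out only this algebraic unfolding (the convergence part being stated in the text preceding the theorem), so your write-up is in fact slightly more complete in making the two-step structure explicit.
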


\begin{proof}
 See Appendix A.
\end{proof}
Following Lemma \ref{Theor_Determ}, Theorem \ref{Theor_band_delta}, and Definition \ref{Def_stability}, we can see that there exists $\delta^* \in (0,1)$ such that for all discount factors in $(\delta^*,1)$, the proposed algorithm in Table~\ref{SIDG} will converge to a stable partition $\Pi^*$.
\subsection{Complexity and Stability Analysis}
The complexity of the algorithm in Table I lies in the complexity of the preferable one-step coalitional movement. Therefore, a good measure of complexity will be the number of possible (per coalition) movements. For a given network structure, one possible move for a coalition $S_i$ is to form a new larger coalition $\mathcal{S}_i\cup \mathcal{S}_j$. This move implies that a given coalition will try to merge with other coalitions while ensuring that the bit rate between the members of the merged coalition is equal to or larger than the desirable rate, i.e., $\frac{B}{T}\leq R_{im}$. In the most complex case, the total number of merge attempts will be $\frac{M(M-1)}{2}$. The second type of coalition moves occurs when the MTDs in a given coalition $\mathcal{S}_i$ want to split and  form a $k$-person coalition out of coalition $\mathcal{S}_i$ where $k<|\mathcal{S}_i|$. The total number of split moves for each MTD $m$ is $|\mathfrak{C}_m|$. The most complex case for splitting occurs when the network forms an $M$-person grand coalition. The total number of split attempts will be $\sum_{m=1}^{M}|\mathfrak{C}_m|$. In terms of computations, for each movement, the value function should be calculated under stochastic changes. This calculation depends on the queue length in the coalition and the value of $\delta$. A larger value of $\delta$ leads to the need for computing additional terms in (14) and (16). To show the effect of $\delta$ on the number of terms of the series in (14) and (16), for each value of $\delta$, we define a discrete value $n_{\delta}$ where $\delta^{m}\simeq 0$ for $n_{\delta} \leq m$. This means that value function should be calculated until next $n_{\delta}$ time slots. Following (14) and (16), in the worst case, the complexity of calculating the value function is $\big(\min\{M,K\}\big)^{n_{\delta}}$ where $M$ is the maximum size of grand coalition and $K$ is maximum available buffer size. Consequently, the complexity of the proposed algorithm in Table I is $O\big(\big( \sum_{m=1}^{M}|\mathfrak{C}_m|+\frac{M(M-1)}{2} \big) \times \big(\min\{M,K\}\big)^{n_{\delta}}\big)$.

In practice, the merge process requires a significantly lower number of attempts because the MTDs can only attempt to merge with coalitions within their communication range that also satisfy the bit rate requirement. Since the size of a coalition is typically small and limited due to the cost for cooperation, the split moves will be limited to finding all possible partitions for small sets. Consequently, we will seldom encounter an $M-$sized coalition. In fact, most formed coalitions will be of size $K$ that is much smaller than $M$.

\section{Simulation Results and Analysis}
\label{Sec:Performance}
For our simulations, we consider a single cell served by one BS that is located at the center of a $400$~m~$\times$~$400$~m square area. We consider two distribution models of the MTDs around the BS: (A) uniform and (B) cluster-based distributions. In the uniform distribution model, the MTDs randomly distributed across square area. {In a cluster-based distribution, we consider cluster centers with density of $5\times 10^{-5}$ cluster/$\text{m}^2$ whose locations result from the realization of a Poisson point process. In each cluster, the locations of the MTDs result from the realization of a normal distribution around the cluster center. The number of MTDs per cluster is determined by a Poisson distribution whose mean value is equal to the ratio of the number of MTDs to the number of cluster centers in each realization.} The time slot for complete RA request is set to $T=1$~millisecond. The number of available preambles for RACH is $\mu=256$. The maximum power consumption on each RB for sending data over the cellular link to the BS is set to $25$~dBm. The maximum size of each queue is $K=30$. We assume that the maximum power consumption used by MTDs for sending requests to a coalition head is $5$~dBm on each RB~\cite{ho2012energy} during each time slot. The bandwidth of each resource block is 15 kHz. We consider a 2 GHz carrier frequency. The noise power spectral density $N_0$ is −170 dBm per Hz. We consider a path loss exponent of 2.5 and a Rayleigh fading with mean 1 for the channel model. Moreover, the length of each packet is set to 50 bits. We consider that the probability of sending a request for each MTD, $p$, is equal to $0.3$. The unit cost parameter, $\gamma$ is $0.2$. For each number of MTDs, we apply proposed stochastic coalition formation algorithm in Table~\ref{SIDG} to find out the cooperative groups.
%To find $\delta^*$, first, we set $\delta=1$ and run the proposed algorithm in Table~\ref{SIDG}. Then, we iteratively decrease $\delta$ using small steps $\epsilon \ll 1$, until the stability conditions in (\ref{stable_cond}) are satisfied.
\subsection{Effect of the number of MTDs}
Figs.~\ref{NM:Fail},~\ref{NM:Energy}, and~\ref{PoA:NM} show the effect of the number of MTDs on the stochastic cooperative random access model. In these figures, the number of MTDs is varied from 200 to 1000.

{Fig.~\ref{NM:Fail} shows the network's fail ratio which is defined as the ratio between the number of requests that collided and the total number of requests that all MTDs sent over the RACH during each time slot. From Fig.~\ref{NM:Fail}, we can see that, by increasing the number of MTDs, the fail ratio increases. This is due to the fact that, when the number of MTDs increases, the average arrival rate of the queue in each formed coalition increases. Consequently, sending the access requests over the RACH increases, and, thus, the fail ratio resulting from the stochastic cooperative random access model will increase to the same value of traditional noncooperative random access model. On the average, Fig.~\ref{NM:Fail} shows that the stochastic cooperative random access model reduces the fail ratio of around 14\% and 34\%, for uniform distribution and cluster-based distribution, respectively, compared to the traditional random access protocol, and, on the average, the stochastic cooperative random access model increases the fail ratio of around 17\% and 32\%, for uniform distribution and cluster-based distributions, respectively, compared to the optimal solution. However, the complexity of the optimal solution is significantly larger than the complexity of our proposed distributed algorithm. In particular, the complexity of the optimal solution grows exponentially with the number of MTDs while the required iterations for convergence of the proposed algorithm is restricted to the number of merge-and-split moves per coalition. Thus, the proposed solution offers a better balance between complexity and performance.}
\begin{figure}[t]
\centering
\includegraphics[width=8.0cm,height=2.1in]{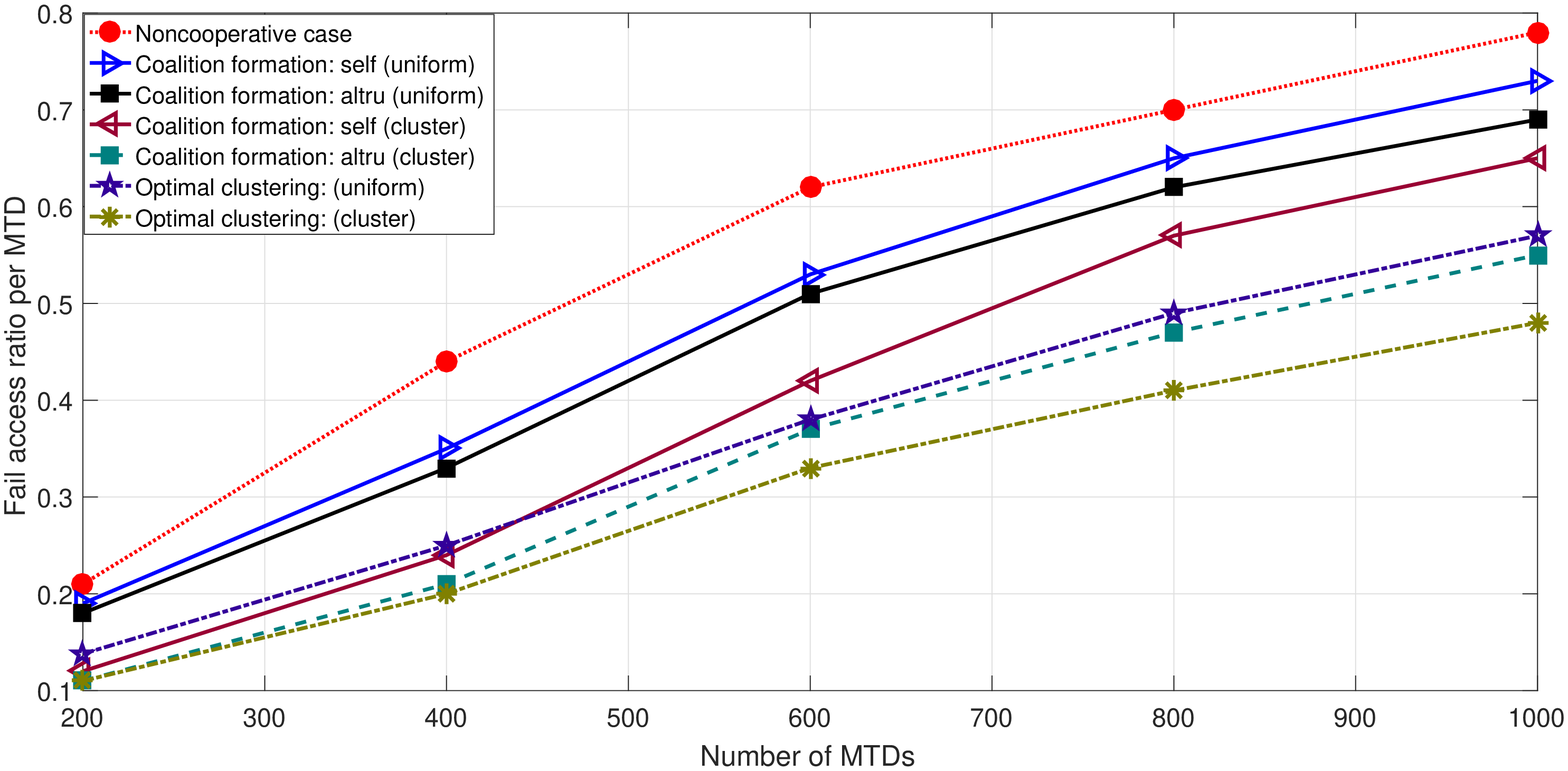}
\caption{Effect of number of MTDs on fail ratio.}
\label{NM:Fail}
\end{figure}

{In Fig.~\ref{NM:Energy}, we show the average per-MTD energy consumption for transmitting over the RACH. From Fig.~\ref{NM:Energy}, we can see that the average per-MTD energy consumption resulting from the proposed stochastic cooperative random access model is less than the one resulting from the traditional noncooperative random access model. The reason is that, in each coalition, all of the MTDs share the energy consumed for acting as head. Fig.~\ref{NM:Energy} also shows that selfish and altruistic coalition formation achieve an almost equal fail ratio and energy consumption. When the MTDs are distributed according to cluster-based distribution, the performance of the proposed coalition formation algorithm improves for both the fail ratio and energy consumption. Fig.~\ref{NM:Energy} also shows that, on the average, the proposed approach reduces the energy consumption of around 16\% and 31\% for uniform distribution and cluster-based distributions, respectively, compared to a traditional random access protocol. Moreover, on the average, the stochastic cooperative random access model consumes around 14\% and 27\% more energy compared to the optimal solution, for the uniform and cluster-based distribution, respectively.}
\begin{figure}[t]
\centering
\includegraphics[width=8.0cm,height=2.1in]{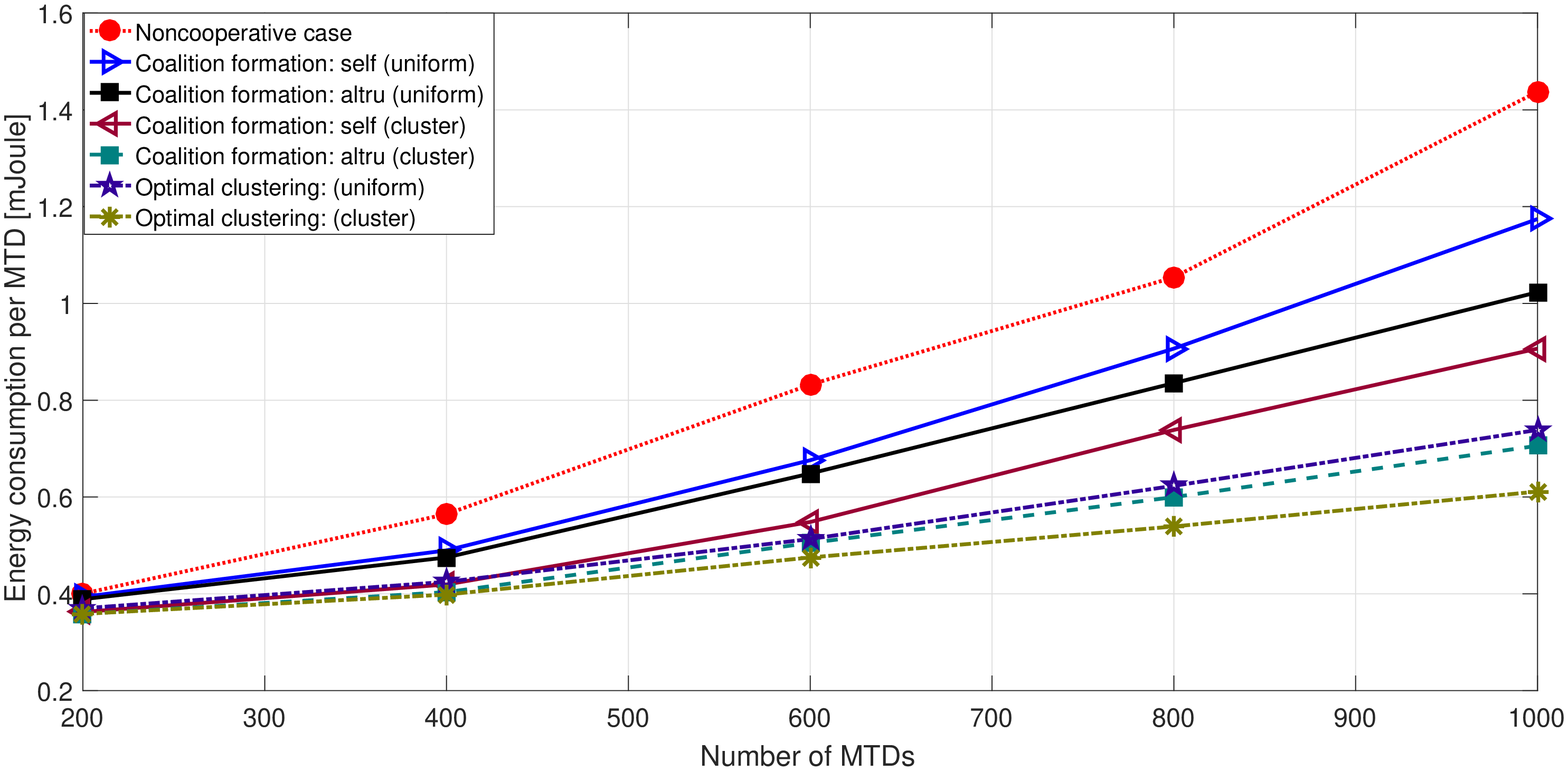}
\caption{Effect of number of MTDs on the energy consumption per MTD.}
\label{NM:Energy}
\end{figure}

{Fig.~\ref{PoA:NM} shows the price of anarchy which is defined as the ratio between the utility achieved by the MTDs at the convergence of the proposed stochastic cooperative random access model and the utility achieved by the MTDs under the optimal solution. Note that we are interested in price of anarchy values that are close to 1 (or $100\%$) in which case the formed cooperative groups at the convergence of the proposed stochastic coalitional game provide a good approximation of the optimal solution.} In Fig.~\ref{PoA:NM}, we can see that, by increasing the number of MTDs, the price of anarchy decreases due to the fact that a a network having more MTDs will provide more opportunities for cooperation. Thus, the number of stable cooperative groups resulting from the proposed stochastic cooperative random access model is greater than the number of clusters under the optimal solution in (\ref{altruprob}). The highest price of anarchy is observed when the MTDs are distributed according to a cluster-based distribution under altruistic cooperation. On the average, Fig.~\ref{PoA:NM} shows that the price of anarchy is 80\% and 93\% (76\% and 83\%) for uniform distribution and cluster-based distribution under altruistic (selfish) cooperation, respectively. Thus, on the average, the performance gap between the proposed distributed algorithm and the optimal solution are 20.5\% and 13.5\% for the uniform and cluster-based distributions, respectively. However, the complexity and signaling overhead of our proposed distributed algorithm are much smaller than the signaling overhead and complexity of the optimal solution.
\begin{figure}[t]
\centering
\includegraphics[width=8.0cm,height=2.1in]{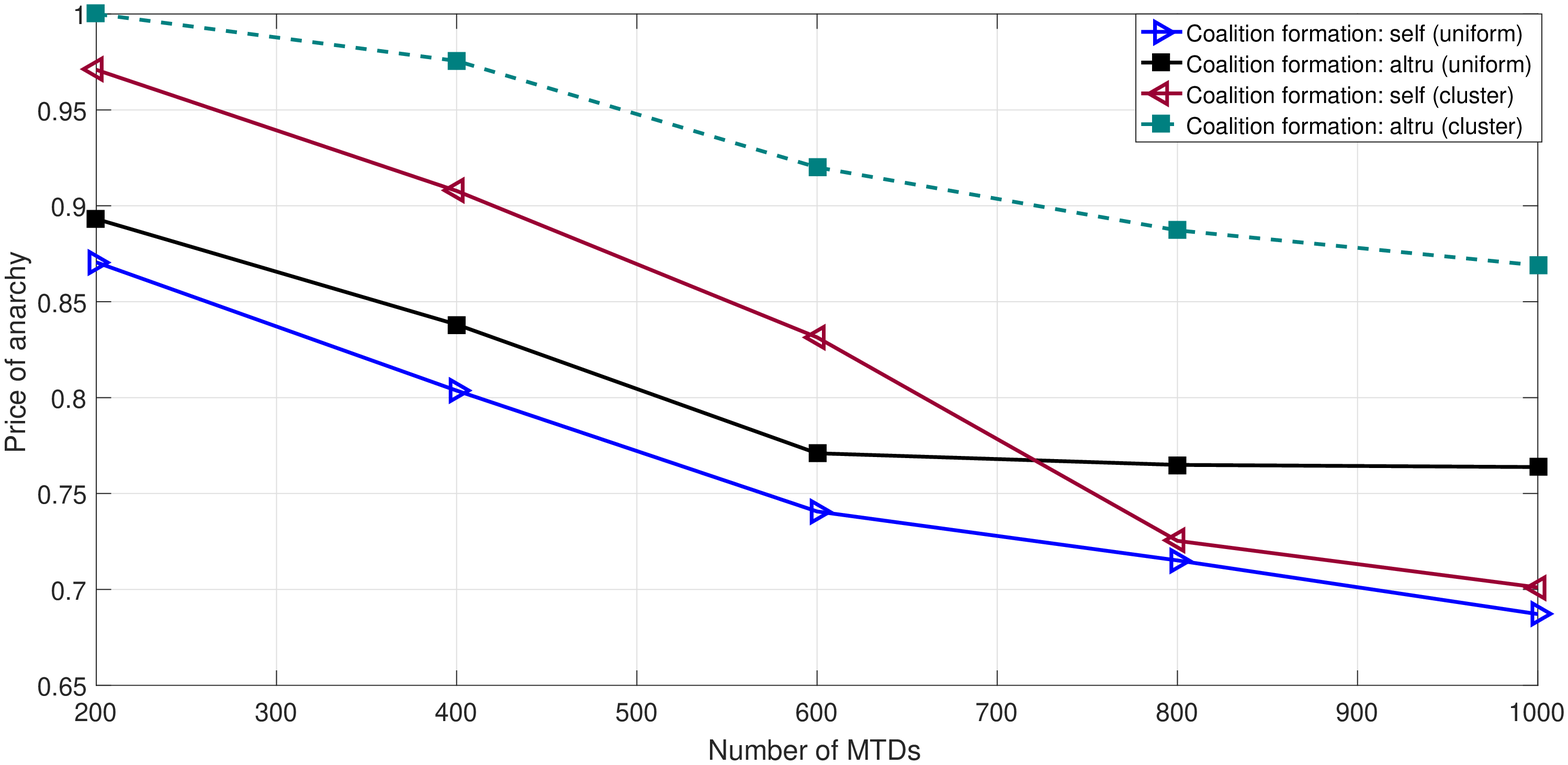}
\caption{Price of anarchy as function of the number of MTDs.}
\label{PoA:NM}
\end{figure}

%Fig.~\ref{N_coal:NM} shows how the coalition size changes as the number of MTDs varies. By increasing the number of MTDs, the size of coalitions decreases since the fail ratio of the stochastic cooperative random access model becomes closer to the traditional noncooperative random access for larger networks, as seen in Fig.~\ref{NM:Fail}. Thus, the departure rate of coalitions' queues decreases, meanwhile forming large-size coalition leads to a higher rate of incoming packets at the queue of thr coalition. Thus, the MTDs do not receive additional gains from cooperation when the number of MTDs is high. Consequently, for large networks, MTDs prefer to form smaller coalitions to decrease the rate of arrival at their queues within a coalition.
%\begin{figure}[t]
%\centering
%\includegraphics[width=4.5in]{N_coal_NM.eps}
%\caption{Effect of number of MTDs on coalition size.}
%\label{N_coal:NM}
%\end{figure}

Fig.~\ref{NM:N_iteration} shows how the number of iterations needed for convergence of the proposed algorithm in Table I, for two different values of the discount factor $\delta$ when the MTDs are altruistic. From this figure, we can see that, when the number of MTDs is 600 for uniform (cluster-based) distribution, the number of iterations will be 150 and 115 (195 and 165) for discount factor values of $0.5$ and $0.8$, respectively. By increasing the number of MTDs, the number of iterations increases.  When the number of MTDs increases, the MTDs will form more coalitions with smaller size. This means that MTDs in a large coalition would prefer to split into smaller ones. In the cluster-based distribution, the size of coalitions is larger than in the uniform distribution. Thus, the number of iterations needed to form stable coalitions is smaller for the cluster-based distribution. When the discount factor increases, the MTDs become more motivated to stay in their coalition for future payoff, thus the number of iterations decreases. Such a number of iterations is quite reasonable for a dense network having thousands of MTDs, as it implies roughly about 3 to 4 iterations per device. As explained next, such a convergence time, which is needed only for the initial coalition formation process, is practical. In essence, an initial delay is required for achieving the stability of all coalitions. In addition, the convergence results shown correspond to a network which starts with an initial state in which each MTD is a singleton, noncooperative player. Such an initial state is, in fact, the worst-case, in terms of convergence time. Once these singleton MTDs form their initial set of coalitions, if there is a need to re-run the coalition formation process due to stochastic changes in environment, the algorithm will be performed starting from the last convergence state not the initial state. Naturally, such a process will require fewer iterations than in the initial state, as the MTDs would have already self-organized into an initial set of coalitions. Moreover, in practice, the changes in the location or incoming traffic of MTDs happen only within a limited area of the network. Thus, there is no need to perform the proposed distributed algorithm for all of the MTDs.
\begin{figure}[t]
\centering
\includegraphics[width=8.0cm,height=2.1in]{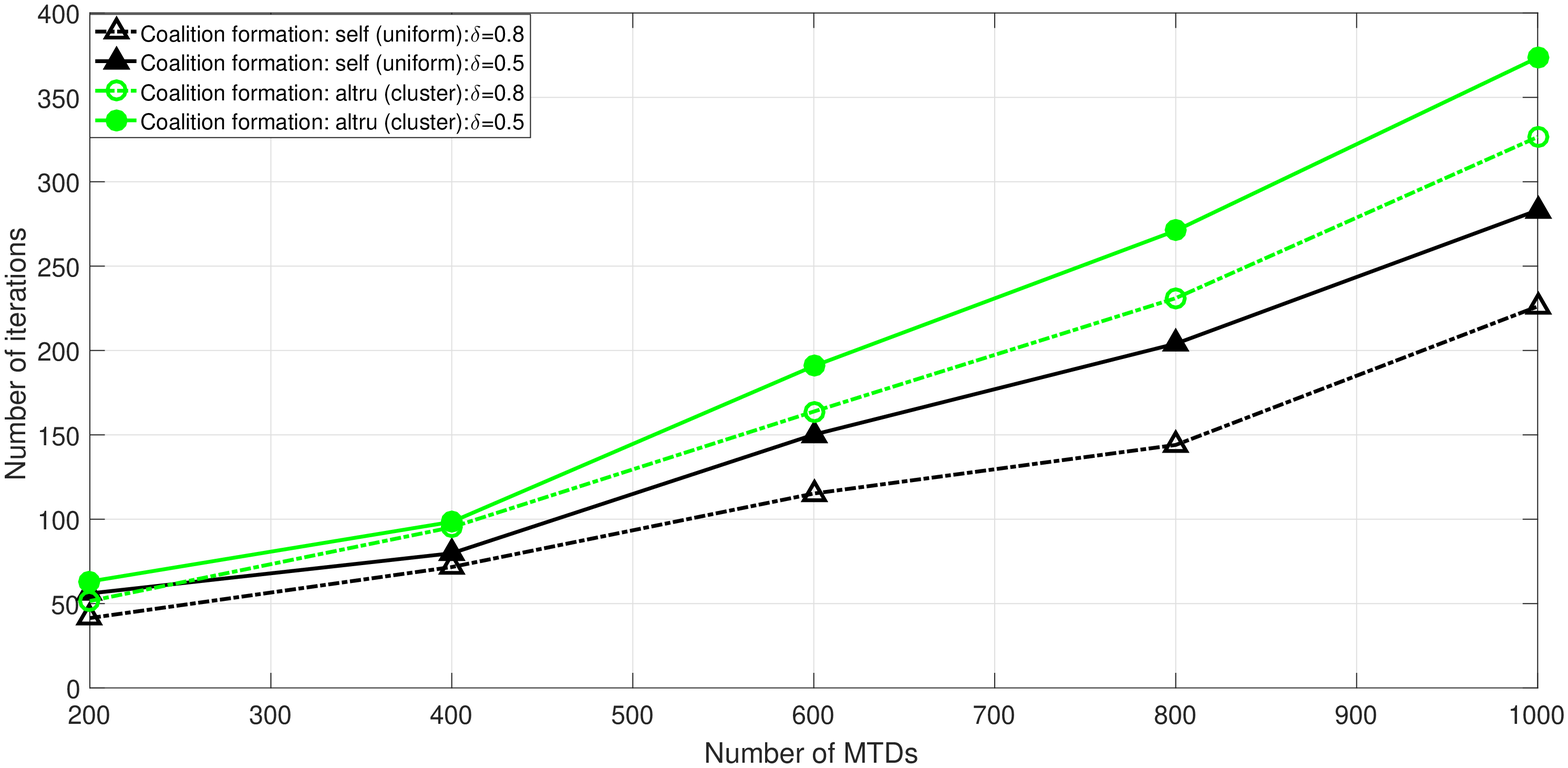}
\caption{Effect of number of MTDs on the number of iterations.}
\label{NM:N_iteration}
\end{figure}

Fig.~\ref{NM:N_move} shows how the average number of moves per coalition changes as a function of the network size and the discount factor, in the case of selfish cooperation. For example, when the number of MTs is 600 for the uniform (cluster-based) distribution, the average number of moves per coalition will be around 1 and 2 (3 and 6) for $\delta = 0.5$ and $\delta = 0.8$, respectively. As the number of MTDs increases, the average number of moves per coalition decreases. Moreover, the average number of moves per coalition becomes smaller for higher values of $\delta$. This decrease in the average number of moves per coalition can be explained as follow. When the number of MTDs increases, the fail ratio (energy consumption) of MTDs achieved by our proposed coalition formation under stochastic changes decreases (increases) (See Figs.~\ref{NM:Fail} and~\ref{NM:Energy}), and the MTDs prefer to form smaller-sized coalitions. This naturally justifies the smaller number of moves per coalition. Moreover, as $\delta$ increases, the MTDs become more farsighted and, as a result, they will be more inclined stay in their coalitions instead of moving to form other coalitions.
\begin{figure}[t]
\centering
\includegraphics[width=8.0cm,height=2.1in]{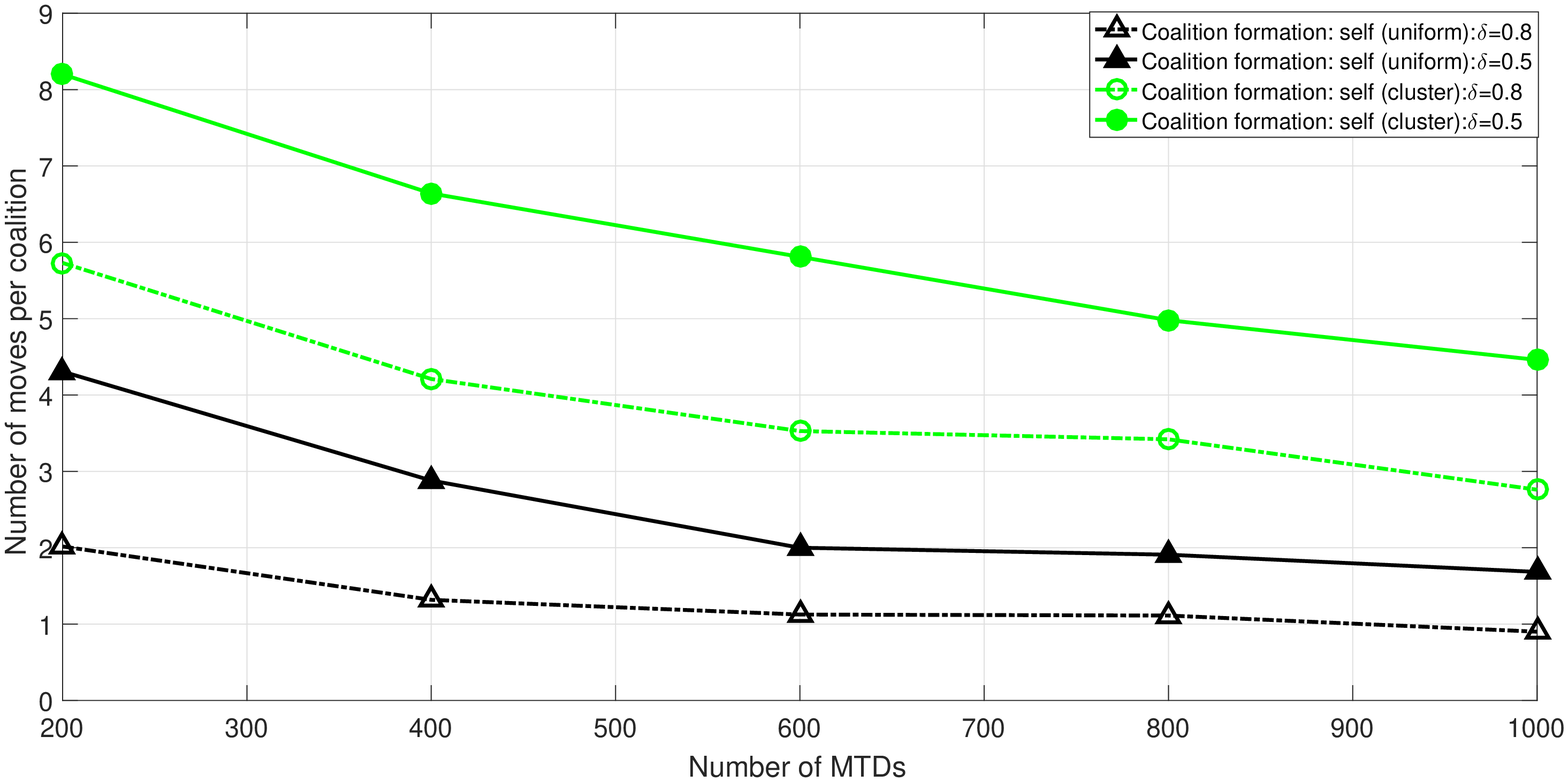}
\caption{Effect of number of MTDs on the number of moves per coalition.}
\label{NM:N_move}
\end{figure}
\subsection{Effect of preferences}
As shown in Figs.~\ref{Fail_Pref} and ~\ref{Energy_Pref}, we  consider three scenarios: Scenario I in which MTDs are more sensitive to the queue length than the energy consumption ($\alpha=0.9,\beta=0.1$), and Scenario II in which MTDs are equally sensitive to the queue length and as energy consumption ($\alpha=0.5,\beta=0.5$), and Scenario III where MTDs favor the queue length less than energy consumption ($\alpha=0.1,\beta=0.9$). To show the effect of the signaling overhead, we analyze the results of the three scenarios for two different values of $\gamma$, i.e., $0.05$ and $0.2$. The number of MTDs is $500$ and the probability of sending requests to the BS is $0.3$.

According to Fig.~\ref{Fail_Pref} and Fig.~\ref{Energy_Pref}, in Scenario I (III) the average fail ratio and energy consumption increase (decrease) under the stochastic cooperative random access model when compared with Scenario II. In addition, when the cost of the signaling overhead decreases, the average fail ratio and energy consumption decrease in all of the scenarios. For example, when $\gamma=0.2$, under the selfish (altruistic) cooperation scheme for uniform (cluster-base) distribution, the average fail ratios are $0.62$, $0.44$, and $0.38$ (0.34, 0.28, and 0.21), and the energy consumption per MTD will be $0.82$, $0.53$, and $5.1$ (0.47, 0.44, and 0.41) mJoules for Scenarios I, II, and III, respectively. Here, we observe that the coalition size decreases (increases) in Scenario I (III) and the number of formed coalitions increases (decreases) in Scenario I (III). An increase in the number of formed coalitions leads to additional load over the RACH and a larger coalition size leads to less energy consumption for the members of the coalition. Consequently, the fail ratio and energy consumption increase for Scenario I, but the fail ratio and energy consumption decrease for Scenario III. When the unit cost parameter of the signaling overhead becomes smaller for the MTDs, the coalition size increases. Thus, the fail ratio and energy consumption decreases when $\gamma=0.05$.\vspace{-0.5cm}
\begin{figure}[t]
\centering
\includegraphics[width=8.0cm,height=2.1in]{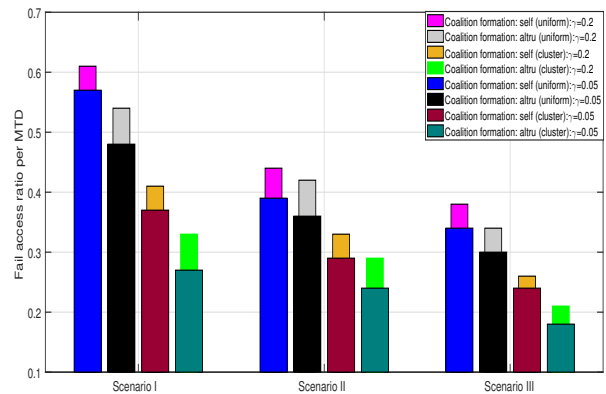}
\caption{Effect of preference of MTDs on fail ratio when $M=500$.}
\label{Fail_Pref}
\end{figure}
\begin{figure}[t]
\centering
\includegraphics[width=8.0cm,height=2.1in]{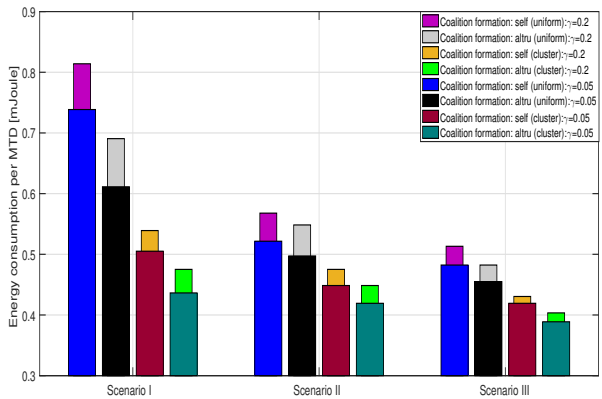}
\caption{Effect of preference of MTDs on energy consumption when $M=500$.}
\label{Energy_Pref}
\end{figure}

\section{Conclusion}
\label{Sec:Conclusion}
In this paper, we have proposed a novel queue length and energy-aware cooperation scheme for optimizing RACH access in M2M cellular networks. In the proposed model, the MTDs autonomously engage in a coalition formation process to coordinate their RACH access by forming cooperative groups. We have modeled the cooperation of MTDs during each time slot as a coalitional game in stochastic characteristic function form. In this game, the MTDs seek to optimize a payoff function that includes two components: a deterministic component related to the certain payoff achieved at the current time slot and a random term that corresponds to the correspond stochastic payoff achieved in future time slots. To model the cooperation of MTDs during different time slots, we have used an $M$-person stochastic coalition game. To solve this game, we have introduced a novel coalition formation algorithm that is shown to reach a stable partition. We have considered not only the energy consumption but also queue length of access requests as the payoff of the MTDs. We have shown that, despite the stochastic changes in the payoff of the MTDs, the MTDs can form stable coalitions, if they are sufficiently farsighted. Simulation results have also shown that on the average, the proposed stochastic coalition formation algorithm can significantly reduce the fail ratio and energy consumption of the system. {For future work, we can analyze the system in presence of multiple base stations and inter-cell interference.}

\bibliographystyle{IEEEtran}
\bibliography{references}

% Generated by IEEEtran.bst, version: 1.14 (2015/08/26)
\begin{thebibliography}{10}
\providecommand{\url}[1]{#1}
\csname url@samestyle\endcsname
\providecommand{\newblock}{\relax}
\providecommand{\bibinfo}[2]{#2}
\providecommand{\BIBentrySTDinterwordspacing}{\spaceskip=0pt\relax}
\providecommand{\BIBentryALTinterwordstretchfactor}{4}
\providecommand{\BIBentryALTinterwordspacing}{\spaceskip=\fontdimen2\font plus
\BIBentryALTinterwordstretchfactor\fontdimen3\font minus
  \fontdimen4\font\relax}
\providecommand{\BIBforeignlanguage}[2]{{%
\expandafter\ifx\csname l@#1\endcsname\relax
\typeout{** WARNING: IEEEtran.bst: No hyphenation pattern has been}%
\typeout{** loaded for the language `#1'. Using the pattern for}%
\typeout{** the default language instead.}%
\else
\language=\csname l@#1\endcsname
\fi
#2}}
\providecommand{\BIBdecl}{\relax}
\BIBdecl

\bibitem{ghavimi2014m2m}
F.~Ghavimi and H.~H. Chen, ``{M2M communications in 3GPP LTE/LTE-A networks:
  architectures, service requirements, challenges, and applications},''
  \emph{IEEE Communications Surveys \& Tutorials}, vol.~17, no.~2, pp.
  525--549, May 2015.

\bibitem{al2015internet}
A.~Al-Fuqaha, M.~Guizani, M.~Mohammadi, M.~Aledhari, and M.~Ayyash, ``{Internet
  of things: a survey on enabling technologies, protocols, and applications},''
  \emph{IEEE Communications Surveys \& Tutorials}, vol.~17, no.~4, pp.
  2347--2376, June 2015.

\bibitem{cheng2015d2d}
R.-S. Cheng, C.-M. Huang, and G.-S. Cheng, ``{A D2D cooperative relay scheme
  for machine-to-machine communication in the LTE-A cellular network},''
  \emph{Proc. of IEEE Int. Conf Information Networking (ICOIN)}, pp. 153--158,
  Cambodia, Jan 2015.

\bibitem{laya2014random}
A.~Laya, L.~Alonso, and J.~Alonso-Zarate, ``{Is the random access channel of
  LTE and LTE-A suitable for M2M communications? a survey of alternatives},''
  \emph{IEEE Communications Surveys \& Tutorials}, vol.~16, no.~1, pp. 4--16,
  February 2014.

\bibitem{shafiq2012first}
M.~Z. Shafiq, L.~Ji, A.~X. Liu, J.~Pang, and J.~Wang, ``{Large-scale
  measurement and characterization of cellular machine-to-machine traffic},''
  \emph{IEEE/ACM Trans. on Networking}, vol.~21, no.~6, pp. 1960--1973, Dec
  2013.

\bibitem{hussain2014multi}
F.~Hussain, A.~Anpalagan, and M.~Naeem, ``{Multi-objective MTC device
  controller resource optimization in M2M communication},'' \emph{Proc. IEEE
  Biennial Symposium on Communications}, pp. 184--188, Kingston, Canada, June
  2014.

\bibitem{Evolved2011Universal}
G.~T.~. V10.4.0, ``Evolved universal terrestrial radio access (e-utra);
  physical channels and modulation,'' Dec. 2011.

\bibitem{hasan2013random}
M.~Hasan, E.~Hossain, and D.~Niyato, ``{Random access for machine-to-machine
  communication in LTE-advanced networks: issues and approaches},'' \emph{IEEE
  Communications Magazine}, vol.~51, no.~6, pp. 86--93, June 2013.

\bibitem{ilori2015random}
A.~Ilori, Z.~Tang, J.~He, K.~Blow, and H.-H. Chen, ``{A random channel access
  scheme for massive machine devices in LTE cellular networks},'' \emph{Proc.
  of IEEE Int. Conf. on Communications (ICC)}, pp. 2985--2990, London, UK, June
  2015.

\bibitem{thomsen2013code}
H.~Thomsen, N.~K. Pratas, {\v{C}}.~Stefanovi{\'c}, and P.~Popovski,
  ``{Code-expanded radio access protocol for machine-to-machine
  communications},'' \emph{Trans. on Emerging Telecommunications Technologies},
  vol.~24, no.~4, pp. 355--365, 2013.

\bibitem{yang2012performance}
X.~Yang, A.~Fapojuwo, and E.~Egbogah, ``Performance analysis and parameter
  optimization of random access backoff algorithm in lte,'' \emph{Proc. of IEEE
  Vehicular Technology Conference (VTC)}, pp. 1--5, Quebec City, Canada, Sep
  2012.

\bibitem{kimenhanced}
T.~Kim, H.~S. Jang, and D.~K. Sung, ``{An enhanced random access scheme with
  spatial group based reusable preamble allocation in cellular M2M networks},''
  \emph{IEEE Communications Letters}, vol.~19, no.~10, pp. 1714--1717, Oct
  2015.

\bibitem{jang2014spatial}
H.~S. Jang, S.~M. Kim, K.~S. Ko, J.~Cha, and D.~K. Sung, ``{Spatial group based
  random access for M2M communications},'' \emph{IEEE Communications Letters},
  vol.~18, no.~6, pp. 961--964, June 2014.

\bibitem{cheng2011prioritized}
J.~P. Cheng, C.~h.~Lee, and T.~M. Lin, ``{Prioritized random access with
  dynamic access barring for RAN overload in 3GPP LTE-A networks},''
  \emph{Proc. of IEEE GLOBECOM Workshops}, pp. 368--372, Houston, USA, Dec
  2011.

\bibitem{fu2014group}
H.-L. Fu, P.~Lin, H.~Yue, G.-M. Huang, and C.-P. Lee, ``{Group mobility
  management for large-scale machine-to-machine mobile networking},''
  \emph{IEEE Trans. on Vehicular Technology}, vol.~63, no.~3, pp. 1296--1305,
  March 2014.

\bibitem{lien2011toward}
S.~Y. Lien, K.~C. Chen, and Y.~Lin, ``{Toward ubiquitous massive accesses in
  3GPP machine-to-machine communications},'' \emph{IEEE Communications
  Magazine}, vol.~49, no.~4, pp. 66--74, April 2011.

\bibitem{lee2013feasibility}
H.~K. Lee, D.~M. Kim, Y.~Hwang, S.~M. Yu, and S.~L. Kim, ``Feasibility of
  cognitive machine-to-machine communication using cellular bands,'' \emph{IEEE
  Wireless Communications Magazine}, vol.~20, no.~2, pp. 97--103, April 2013.

\bibitem{tu2011energy}
C.~Y. Tu, C.~Y. Ho, and C.~Y. Huang, ``{Energy-efficient algorithms and
  evaluations for massive access management in cellular based machine to
  machine communications},'' \emph{Proc. of IEEE Vehicular Technology
  Conference (VTC)}, pp. 1--5, San Francisco, USA, Sep 2011.

\bibitem{ho2012energy}
C.~Y. Ho and C.~Y. Huang, ``{Energy-saving massive access control and resource
  allocation schemes for M2M communications in OFDMA cellular networks},''
  \emph{IEEE Wireless Communications Letters}, vol.~1, no.~3, pp. 209--212,
  June 2012.

\bibitem{wei2012joint}
S.-E. Wei, H.-Y. Hsieh, and H.-J. Su, ``{Joint optimization of cluster
  formation and power control for interference-limited machine-to-machine
  communications},'' \emph{Proc. of IEEE Global Communications Conference
  (GLOBECOM)}, pp. 5512--5518, Anaheim, CA, USA, Dec 2012.

\bibitem{sandholm1999coalition}
M.~A. O.~S. T.~Sandholm, K.~Larson and F.~Tohme, ``Coalition structure
  generation with worst case guarantees,'' \emph{Artificial Intelligence}, vol.
  111, no.~1, pp. 209--238, July, 1999.

\bibitem{lu2014layered}
X.~Lu, P.~Wang, and D.~Niyato, ``A layered coalitional game framework of
  wireless relay network,'' \emph{IEEE Transactions on Vehicular Technology},
  vol.~63, no.~1, pp. 472--478, Jan, 2014.

\bibitem{7066970}
Y.~Xiao, K.~C. Chen, C.~Yuen, Z.~Han, and L.~A. DaSilva, ``A bayesian
  overlapping coalition formation game for device-to-device spectrum sharing in
  cellular networks,'' \emph{IEEE Transactions on Wireless Communications},
  vol.~14, no.~7, pp. 4034--4051, July, 2015.

\bibitem{han2012game}
Z.~Han, W.~Saad, T.~Ba\c{s}ar, and A.~Hj{\o}rungnes, \emph{{Game theory in
  wireless and communication networks: theory, models, and
  applications}}.\hskip 1em plus 0.5em minus 0.4em\relax Cambridge University
  Press, 2012.

\bibitem{5230848}
W.~Saad, Z.~Han, M.~Debbah, A.~Hj{\o}rungnes, and T.~Ba\c{s}ar, ``{Coalitional
  game theory for communication networks},'' \emph{IEEE Signal Processing
  Magazine}, vol.~26, no.~5, pp. 77--97, Sep 2009.

\bibitem{7395023}
Y.~Zhang, F.~Li, X.~Ma, K.~Wang, and X.~Liu, ``{Cooperative energy-efficient
  content dissemination using coalition formation game over device-to-device
  communications},'' \emph{Canadian Journal of Electrical and Computer
  Engineering}, vol.~39, no.~1, pp. 2--10, January 2016.

\bibitem{7029666}
R.~G. Cheng, J.~Chen, D.~W. Chen, and C.~H. Wei, ``{Modeling and Analysis of an
  Extended Access Barring Algorithm for Machine-Type Communications in LTE-A
  Networks},'' \emph{IEEE Transactions on Wireless Communications}, vol.~14,
  no.~6, pp. 2956--2968, June 2015.

\bibitem{cooper1981introduction}
R.~B. Cooper, \emph{Introduction to queueing theory}.\hskip 1em plus 0.5em
  minus 0.4em\relax North Holland, 1981.

\bibitem{larmo2012ran}
A.~Larmo and R.~Susitaival, ``{RAN overload control for machine type
  communications in LTE},'' \emph{Proc. of IEEE GLOBECOM Workshops}, pp.
  1626--1631, Anaheim, CA, USA, Dec 2012.

\bibitem{hwang2012multiple}
C.~Hwang and M.~A. Syed, \emph{{Multiple objective decision making—methods
  and applications: a state-of-the-art survey}}.\hskip 1em plus 0.5em minus
  0.4em\relax Springer Science \& Business Media, 2012.

\bibitem{sen2000searching}
S.~Sen and P.~S. Dutta, ``Searching for optimal coalition structures,'' pp.
  287--292, 2000.

\bibitem{dawy2015towards}
Z.~Dawy, W.~Saad, A.~Ghosh, J.~G. Andrews, and E.~Yaacoub, ``Toward massive
  machine type cellular communications,'' \emph{IEEE Wireless Communications},
  vol.~24, no.~1, pp. 120--128, February 2017.

\bibitem{granot1977cooperative}
D.~Granot, ``{Cooperative games in stochastic characteristic function form},''
  \emph{Management Science}, vol.~23, no.~6, pp. 621--630, 1977.

\bibitem{kaluski2002n}
J.~Kaluski, ``{N-Person stochastic games with coalitions},'' \emph{Cybernetics
  and Systems Analysis}, vol.~38, no.~3, pp. 387--395, 2002.

\bibitem{aumann1992handbook}
R.~J. Aumann and S.~Hart, \emph{{Handbook of game theory with economic
  applications}}.\hskip 1em plus 0.5em minus 0.4em\relax Elsevier, 1992,
  vol.~2.

\bibitem{young2014handbook}
P.~Young and S.~Zamir, \emph{Handbook of game theory}.\hskip 1em plus 0.5em
  minus 0.4em\relax Elsevier, 2014.

\bibitem{xiao2015user}
L.~Xiao, J.~Liu, Q.~Li, N.~B. Mandayam, and H.~V. Poor, ``{User-centric view of
  jamming games in cognitive radio networks},'' \emph{IEEE Trans. on
  Information Forensics and Security}, vol.~10, no.~12, pp. 2578--2590, Dec
  2015.

\bibitem{shoham2008multiagent}
Y.~Shoham and K.~Leyton-Brown, \emph{Multiagent systems: algorithmic,
  game-theoretic, and logical foundations}.\hskip 1em plus 0.5em minus
  0.4em\relax Cambridge University Press, 2008.

\bibitem{konishi2003coalition}
H.~Konishi and D.~Ray, ``Coalition formation as a dynamic process,''
  \emph{Journal of Economic Theory}, vol. 110, no.~1, pp. 1--41, 2003.

\end{thebibliography}

\appendix
\section{}
\subsection{Proof of Theorem~\ref{Theor_band_delta}}
In this appendix, we are going to prove the minimum of $\delta$. Considering the stability conditions in (\ref{stable_cond}) and $\mathcal{C}_k$ in $\mathfrak{C}_m$:
\begin{equation*}
v_m(\mathcal{S}_i^*,Q_{\mathcal{S}_i,t})\geq v_m(\mathcal{C}_k,Q_{\mathcal{C}_k,t}),
\end{equation*}
following (\ref{selfish Value_fun}), we can write:
\begin{align*}
&u_{d,m}({\mathcal{S}_i^*},Q_{\mathcal{S}_i^*,t})+u_{r,m}({\mathcal{S}_i^*},Q_{\mathcal{S}_i^*,t})\geq u_{d,m}({\mathcal{C}_k},Q_{\mathcal{C}_k,t})\nonumber\\
& +u_{r,m}({\mathcal{C}_k},Q_{\mathcal{C}_k,t}),
\end{align*}
according to (\ref{Value_fun}), we can write:
\begin{align*}
& u_{d,m}({\mathcal{S}_i^*},Q_{\mathcal{S}_i^*,t})+
\sum_{n=t+1}^{\infty}\delta^{n-t} \bar{u}_d(\mathcal{S}_i^*,Q_{\mathcal{S}_i*,t\rightarrow n})\geq \nonumber \\
& u_{d,m}({\mathcal{C}_k},Q_{\mathcal{C}_k,t})+
\sum_{n=t+1}^{\infty}\delta^{n-t} \bar{u}_d(\mathcal{C}_k,Q_{\mathcal{C}_k,t\rightarrow n}),
\end{align*}
considering (\ref{U_trans}), we can write:
\begin{align*}
& u_{d,m}({\mathcal{S}_i^*},Q_{\mathcal{S}_i^*,t})+
\sum_{n=t+1}^{\infty}\delta^{n-t} \Delta_{t}^{n} u_d({\mathcal{S}_i^*},Q_{\mathcal{S}_i^*,n})\geq \nonumber \\
& u_{d,m}({\mathcal{C}_k},Q_{\mathcal{C}_k,t})+
\sum_{n=t+1}^{\infty}\delta^{n-t} \Delta_{t}^{n} u_d({\mathcal{C}_k},Q_{\mathcal{C}_k,n}),
\end{align*}
where $\Delta_{t}^{n}$ can be calculated by (\ref{Cal_delta}). By some simplification, we can write:
\begin{align*}
& \sum_{n=t+1}^{\infty}\delta^{n-t}\big(\Delta_{t}^{n} u_d({\mathcal{S}_i^*},Q_{\mathcal{S}_i^*,n})-\Delta_{t}^{n} u_d({\mathcal{C}_k},Q_{\mathcal{C}_k,n})\big)
\geq \nonumber \\
&
u_{d,m}({\mathcal{C}_k},Q_{\mathcal{C}_k,t})-
u_{d,m}({\mathcal{S}_i^*},Q_{\mathcal{S}_i^*,t}).
\end{align*}

Consequently, we can find out the minimum of $\delta$ by considering all of these conditions simultaneously:
\begin{align*}
&\sum_{n=t+1}^{\infty}\delta^{n-t}\big(\Delta_{t}^{n} u_d({\mathcal{S}_i^*},Q_{\mathcal{S}_i^*,n})-\Delta_{t}^{n} u_d({\mathcal{C}_k},Q_{\mathcal{C}_k,n})\big)
\geq\nonumber \\
&
u_{d,m}({\mathcal{C}_k},Q_{\mathcal{C}_k,t})-
u_{d,m}({\mathcal{S}_i^*},Q_{\mathcal{S}_i^*,t})\nonumber \\
&
\forall \mathcal{S}_i^* \in \Pi^*, k=1,2,...,|\mathcal{S}_i^*|-1,\text{ and,}
\nonumber \\
&
\sum_{n=t+1}^{\infty}\delta^{n-t}\big(\Delta_{t}^{n} u_d({\mathcal{S}_i^*},Q_{\mathcal{S}_i^*,n})-\Delta_{t}^{n} u_d(\mathcal{S}_i^*\cup\ \mathcal{S}_j^*,Q_{\mathcal{S}_i^*\cup\ \mathcal{S}_j^*,n})\big)
\geq \nonumber \\
&
u_{d,m}(\mathcal{S}_i^*\cup\ \mathcal{S}_j^*,Q_{\mathcal{S}_i^*\cup\ \mathcal{S}_j^*,t})-
u_{d,m}({\mathcal{S}_i^*},Q_{\mathcal{S}_i^*,t})
\nonumber \\
&
\forall \mathcal{S}_j^* \in \Pi^*, \mathcal{S}_j^*\neq\mathcal{S}_i^*,
\end{align*}

Due to considering all of these conditions together, the minimum of $\delta$ is given.
\end{document}